\journalname{Systems \& Control Letters}
\newtheorem{definition}{Definition}
\newtheorem{theorem}{Theorem}
\newtheorem{corollary}{Corollary}
\newtheorem{lemma}{Lemma}
\newtheorem{assumption}{Assumption}
\newcommand{\rplus}[0
]{\mathbb{R}_+}
\begin{document}

\begin{frontmatter}

%% Title, authors and addresses

%% use the tnoteref command within \title for footnotes;
%% use the tnotetext command for the associated footnote;
%% use the fnref command within \author or \address for footnotes;
%% use the fntext command for the associated footnote;
%% use the corref command within \author for corresponding author footnotes;
%% use the cortext command for the associated footnote;
%% use the ead command for the email address,
%% and the form \ead[url] for the home page:
%%
%% \title{Title\tnoteref{label1}}
%% \tnotetext[label1]{}
%% \author{Name\corref{cor1}\fnref{label2}}
%% \ead{email address}
%% \ead[url]{home page}
%% \fntext[label2]{}
%% \cortext[cor1]{}
%% \address{Address\fnref{label3}}
%% \fntext[label3]{}
\dochead{}
%% Use \dochead if there is an article header, e.g. \dochead{Short communication}
%% \dochead can also be used to include a conference title, if directed by the editors
%% e.g. \dochead{17th International Conference on Dynamical Processes in Excited States of Solids}

\title{Adaptive control of reaction-diffusion PDEs\\ via neural operator-approximated gain kernels \tnoteref{funding}}
\tnotetext[funding]{The first author is supported by the U.S. Department of Energy (DOE) grant DE-SC0024386. The work of M. Krstic was funded by AFOSR grant FA9550-23-1-0535  and NSF grant ECCS-2151525.}

%% use optional labels to link authors explicitly to addresses:
%% \author[label1,label2]{<author name>}
%% \address[label1]{<address>}
%% \address[label2]{<address>}

\author[eceDept]{Luke Bhan\corref{corresponding}}
\cortext[corresponding]{Corresponding Author}
\author[eceDept]{Yuanyuan Shi}
\author[maeDept]{Miroslav Krstic}

\address[eceDept]{Department of Electrical and Computer Engineering, University of California San Diego, La Jolla, CA 92093-0411, USA}
\address[maeDept]{Department of Mechanical and Aerospace Engineering, University of California San Diego, La Jolla, CA 92093-0411, USA}

\begin{abstract}
Neural operator approximations of the gain kernels in PDE backstepping has emerged as a viable method for implementing controllers in real time. With such an approach, one approximates the gain kernel, which maps the plant coefficient into the solution of a PDE, with a neural operator. It is in adaptive control that the benefit of the neural operator is realized, as the kernel PDE solution needs to be computed online, for every updated estimate of the plant coefficient. We extend the neural operator methodology from adaptive control of a hyperbolic PDE to adaptive control of a benchmark parabolic PDE (a reaction-diffusion equation with a spatially-varying and unknown reaction coefficient). We prove global  stability and asymptotic regulation of the plant state for a Lyapunov design of parameter adaptation. The key technical challenge of the result is handling the $2D$ nature of the gain kernels and proving that the target system with two distinct sources of perturbation terms, due to the parameter estimation error and due to the neural approximation error, is Lyapunov stable.  To verify our theoretical result, we present simulations achieving calculation speedups up to $45\times$ relative to the traditional finite difference solvers for every timestep in the simulation trajectory. 
\end{abstract}

\begin{keyword}
Reaction-diffusion systems,
PDE Backstepping,
Neural Operators,
Adaptive Control,
Learning-based control
\end{keyword}
\end{frontmatter}

%%
%% Start line numbering here if you want
%%
% \linenumbers

%% main text
\section{Introduction}
\label{sec:introduction}
First introduced in \cite{bhan_neural_2023}, a new methodology has emerged for employing \emph{neural operator}(NO) approximations of the gain kernels in PDE backstepping. The key advantage of this approach compared to traditional implementations of the kernel, as well as other approximation approaches such as \cite{WOITTENNEK20176786}, 
is the ability to produce the \textbf{entire} kernel in mere milliseconds for the online control law while the training process is decoupled to be precomputed offline. Therefore, perhaps the most valuable application of the neural operator approximated gain kernels is in adaptive control, where the kernel needs to be recomputed, online, for every new estimate of the plant parameter. This was first explored for hyperbolic PDEs in \cite{lamarque2024adaptive}. In this work, we extend the results of \cite{lamarque2024adaptive} to parabolic PDEs where the technical challenge arises both in the Lyapunov analysis of a more complex perturbed target system as well as in the computational implementation, where the neural operator must map functions on an $1D$ domain into a $2D$ triangle. Furthermore, from an application perspective, this paper enables real-time adaptive control of reaction-diffusion PDEs which govern a series of real-world applications including, but not limited to, chemical reactions \cite{https://doi.org/10.1002/anie.200905513}, tubular reactor systems \cite{1024339}, multi-agent and social networking systems \cite{1643380, LEI20131326}, and Lithium-ion batteries \cite{7489035}.  
\paragraph{PDE backstepping for adaptive control}

    The first study of adaptive control for parabolic PDEs was in  \cite{4623267, SMYSHLYAEV20071543, SMYSHLYAEV20071557} which extended the adaptive backstepping results for nonlinear ODEs \cite{kkk} via three methodologies - the Lyapunov approach, passive identifier approach, and swapping approach. In this work, we focus on the Lyapunov approach which appears to exhibit superior transient performance as mentioned in \cite{4623267} and \cite{WANG2021109909}. This was then extended into hyperbolic PDEs which refer the reader to the rich set of literature \cite{8263680, ANFINSEN201786, ANFINSEN201772, ANFINSEN201869, ANFINSEN2018545}. In \cite{BRESCHPIETRI20092074} the authors then explored delay-adaptive control which was later extended in \cite{WANG2021109909} for unknown delays. For more complex systems, we refer the reader to adaptive backstepping schemes across a variety of challenging plants including coupled hyperbolic PDEs \cite{7963000}, coupled hyperbolic PDE-PDE-ODE cascades \cite{9656694} and the wave equation \cite{WANG2020108640}. Lastly, we mention the extensions into event-triggered adaptive control in \cite{WANG2021109637}, \cite{9735290}, \cite{KARAFYLLIS2019166}.
    
\paragraph{Kernel Implementations in PDE Control}

    The downside to the backstepping methodology is that, typically, the gain kernel is governed by a challenging infinite dimensional operator that needs to be approximated during implementation. Thus, we begin our discussion by reviewing a series of works on implementing backstepping controllers without learning. To start, we emphasize that backstepping is a late-lumping approach - i.e., one designs the control law in continuous space for the PDE and then discretizes during implementation \cite{AURIOL2019247, RIESMEIER202311407}. As such, \cite{GRUNE2022105237} exploits this property by decoupling the PDE system into finite dimensional slow and infinite dimensional fast subsystems for controller implementation. In a different approach, \cite{10384080, lin2024matlab} introduced a power series approximation of the gain kernel solution which is both simple and extremely useful for approximating complex kernels such as the Timoshenko beam, but scales poorly with respect to the spatial discretization size needed for accurately simulating PDEs. 

    Neural operators, in contrast, scale extremely well despite the high discretization needed for simulating PDEs \cite{lu2019deeponet}. They were first introduced in a series of work for approximating gain kernels in hyperbolic PDEs \cite{bhan_neural_2023} and parabolic PDEs \cite{krstic2023neural}, as well as general state estimation for nonlinear ODEs \cite{pmlr-v211-bhan23a}. This was then extended for hyperbolic and parabolic systems with delays in \cite{QI2024105714} and \cite{wang2023deep} respectively. Furthermore, \cite{wang2024backstepping} introduce operator approximations for coupled hyperbolic PDEs and \cite{pmlr-v211-yu24} applied neural operators for control of the Aw-Rascale-Zhang (ARZ) PDE with applications to traffic flows. Recently, \cite{vazquez2024gainonly} pointed out that in Hyperbolic PDEs, the approximation can be simplified to only $\hat{k}(1, y)$, but for parabolic PDEs, the entire kernel is needed in the control law. Lastly, we mention the first extension for approximating kernels that require recomputation online was in \cite{lamarque2024gain} for semi-linear hyperbolic PDEs via the gain scheduling approach and as aformentioned, this was later extended into adaptive control of hyperbolic PDEs in \cite{lamarque2024adaptive}.
\paragraph{Paper organization}
We first introduce, in Section \ref{sec:nominal}, the nominal adaptive backstepping control scheme for reaction-diffusion PDEs to expose the reader to the type of result we aim to maintain under the neural operator kernel approximation. We then explore and prove a series of properties for the gain kernel and its time derivative in Section \ref{sec:kernel-properties} in order to apply the universal approximation theorem \cite{lanthaler2023nonlocal} in Section \ref{sec:universal-approx}, to prove the existence of neural operators for approximating the gain kernels to arbitrary desired accuracy. We then state and prove our main result - namely a Lyapunov analysis of the full system with the NO approximated kernel, adaptive update law, and the backstepping control law in Section \ref{sec:main-result}. Lastly, in Section \ref{sec:simulations}, we conclude by presenting simulations demonstrating the efficacy of the proposed neural operator approximated control scheme on a parabolic PDE resembling molecular interactions in chemical reactions.  
\paragraph{Notation}
We use $\|\cdot\|_\infty$ for the infinity-norm, that is $ \|\lambda\|_\infty = \sup_{x \in [0, 1]} |\lambda(x)|$. Furthermore, we use $\|u(x, t)\|$ to be the spatial $L^2$ norm, $\|u(x, t)\| = \left( \int_0^1 u(x, t)^2 dx\right)^{\frac{1}{2}}$. We use $C^n(U; V)$ to indicate functions from set $U$ into set $V$ that have $n$ continuously differentiable derivatives. For scenarios, where the function has multiple arguments, ie, $f(x, y, t)$, we use $C^2_{x, y}C^1_t$ to indicate the function has continuous second derivatives in $x$ and $y$, but only continuous first derivatives with respect to $t$. If, the second argument of above is not given, say ie, $C^1(\mathbb{R}^p)$, then assume the function is mapping into the real numbers $\mathbb{R}$. Lastly, we denote the positive reals by $\rplus = \{x \in \mathbb{R} | x \geq 0 \}$ and Hilbert spaces by $H^n$. For example, $H^2$ is the space of functions with a $L^2$ weak derivative of order $2$. 
\section{Nominal controller for PDE backstepping} \label{sec:nominal}
We begin by introducing the following 1D Reaction-Diffusion PDE with a spatially varying coefficient $\lambda(x)$,
\begin{eqnarray}
    u_t(x, t) &=& u_{xx}(x, t) + \lambda(x)u(x, t)\,, \quad x \in (0, 1) \,,  \label{eq:parabolicMain1}\\
    u(0, t)&=& 0\,, \label{eq:parabolicMain2}\\ 
    u(1, t) &=& U(t) \label{eq:parabolicMain3}\,,
\end{eqnarray}
where $u(x, t)$ is defined for all $t \in \rplus$ with initial condition $u(x, 0) = u_0(x) \in H^2(0, 1)$ that is compatible with the boundary conditions. Further, $\lambda(x): [0, 1] \rightarrow \mathbb{R}$ is an unknown, spatially varying coefficient function that will be estimated online. The standard approach for controlling the PDE \eqref{eq:parabolicMain1}, \eqref{eq:parabolicMain2}, \eqref{eq:parabolicMain3} is to introduce the backstepping transformation
\begin{equation} \label{eq:backsteppingTransform}
    w(x, t) = u(x, t) - \int_0^x k(x, y) u(y, t) dy\,,
\end{equation}
to convert the system into the stable target system
\begin{eqnarray}
    w_t &=& w_{xx} \label{eq:perfectTargetSystem1}\,, \\
    w(0, t) &=& 0 \label{eq:perfectTargetSystem2}\,, \\ 
    w(1, t) &=& 0 \label{eq:perfectTargetSystem3}\,,
\end{eqnarray}
under the feedback control law
\begin{eqnarray}
    U(t) = \int_0^1 k(1, y) u(y, t) dy \label{eq:perfectFeedback}\,.
\end{eqnarray}
To ensure the transformation \eqref{eq:backsteppingTransform} converts \eqref{eq:parabolicMain1}, \eqref{eq:parabolicMain2}, \eqref{eq:parabolicMain3} into \eqref{eq:perfectTargetSystem1}, \eqref{eq:perfectTargetSystem2}, \eqref{eq:perfectTargetSystem3}, the kernel function $k$ must satisfy
\begin{eqnarray}
    k_{xx}(x, y) - k_{yy}(x, y) &=& \lambda(y) k(x, y), \quad (x, y) \in \breve{\mathcal{T}}\,, \label{eq:nominalGainKernel1} \\ 
    k(x, 0) &=& 0\label{eq:nominalGainKernel2} \,, \\
    k(x, x) &=& -\frac{1}{2} \int_0^x \lambda(y) dy \label{eq:nominalGainKernel3}\,,
\end{eqnarray}
where we define the triangular domains $\breve{\mathcal{T}} =\{0 < y \leq x < 1\}$ and $\mathcal{T}=\{0 \leq y \leq x \leq 1\}$. Note, the gain kernel, which is the solution to an infinite dimensional PDE that is not analytically solvable, is explicitly an operator mapping from functions of the spatially varying $\lambda(x)$ into the PDE solution function $k(x,y)$. However, in the adaptive control case, $\lambda(x)$ is unknown and thus needs to be estimated online via some approximation $\hat{\lambda}(x)$. Thus, the PDE kernel in \eqref{eq:nominalGainKernel1}, \eqref{eq:nominalGainKernel2}, \eqref{eq:nominalGainKernel3} becomes a mapping from $\hat{\lambda}(x) \mapsto k(x, y)$ where the solution \textit{requires recomputation at every timestep}. In what follows, we will denote the PDE solution for $\hat{\lambda}$ at time $t$ by $\breve{k}(x, y, t)$, and denote the neural operator approximation of the PDE solution for $\hat{\lambda}$ at time $t$ by $\hat{k}(x, y, t)$. Further, to ensure the mapping from $u$ into the $w$ system is well defined, recall the inverse backstepping transformation,
\begin{eqnarray} \label{eq:inverse-transform}
    u(x, t) = w(x, t) + \int_0^x l(x, y) u(y, t) dy\,,
\end{eqnarray}
where we refer to $l$ as the inverse backstepping kernel. 
Then, following Chapter 11 of \cite{smyshlyaevBook}, the estimator of $\hat{\lambda}$ is given by
\begin{eqnarray}
    \hat{\lambda}_t(x, t) &:=& \text{\rm Proj}(\phi(x, t), \hat{\lambda}(x, t))\label{eq:lambdaEstimator1}\,, \\ 
    \phi(x, t) &:=& \gamma \frac{u(x, t)}{1 + \|w\|^2} \nonumber \\ && \times \left(w(x, t) - \int_x^1 \breve{k}(y, x, t) w(y, t) dy\right) \label{eq:lambdaEstimator2}\,,
\end{eqnarray}
where $\gamma, \bar{\lambda} > 0$ are constants, $\|\lambda\|_\infty \leq \bar{\lambda}$, and the projection is defined as 
\begin{equation}
    \text{\rm Proj}(a, b) := \begin{cases}
        0, & \text{if } |b| = \bar{\lambda} \text{ and } ab > 0\\
        a\,. & \text{otherwise}
    \end{cases}
\end{equation}
 Thus, noting that we introduced a bound on $\lambda$, we formally state the only required assumption on $\lambda$ as the following.
\begin{assumption}
    $\lambda \in C^1([0, 1])$ and there exists a constant $\bar{\lambda} > 0$ such that $\|\lambda\|_\infty  \leq \bar{\lambda}$. 
\end{assumption}
Such an assumption is standard in the backstepping literature as $\lambda \in C^1([0, 1])$ is needed to ensure well-posedness of the kernel PDE \eqref{eq:nominalGainKernel1}, \eqref{eq:nominalGainKernel2}, \eqref{eq:nominalGainKernel3} and the bounded assumption is needed for the adaptive control estimate. Now, we state the main theorem for adaptive control of parabolic PDEs under the exact gains, which we aim to emulate under the neural operator approximated gains in this paper: under the feedback control scheme with the adaptive estimate $\hat{\lambda}$ and true kernel solution $\breve{k}$, the closed-loop system is regulated asymptotically to $0$.

\begin{theorem}
    (\cite{smyshlyaevBook} Stabilization under exact adaptive control scheme). There exists a $\gamma^*$ such that for $\gamma \in (0, \gamma^*)$, for any initial estimate $\hat{\lambda}(x, 0) \in C^1([0, 1])$ with $\|\hat{\lambda}(x, 0)\|_\infty \leq \bar{\lambda}$ and for any initial condition $u_0 \in H^2(0, 1)$ compatbile with boundary conditions, the classical solution of the closed loop system $(u, \hat{\lambda})$ consisting of the plant \eqref{eq:parabolicMain1}, \eqref{eq:parabolicMain2}, \eqref{eq:parabolicMain3}, the update law \eqref{eq:lambdaEstimator1}, \eqref{eq:lambdaEstimator2}, and the control law \eqref{eq:perfectFeedback} is bounded for all $(x, t) \in [0, 1] \times \rplus$ such that
    \begin{equation}
        \lim_{t\to \infty} \sup_{x \in [0, 1]} |u(x, t)| = 0. 
    \end{equation}
\end{theorem}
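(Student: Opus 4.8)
The plan is to reproduce the Lyapunov design of Chapter~11 of \cite{smyshlyaevBook}. First I would pass to the \emph{time-varying} backstepping coordinates $w(x,t)=u(x,t)-\int_0^x\breve{k}(x,y,t)u(y,t)\,dy$, with $\breve{k}(\cdot,\cdot,t)$ the solution of \eqref{eq:nominalGainKernel1}--\eqref{eq:nominalGainKernel3} for the current estimate $\hat\lambda(\cdot,t)$, together with its inverse \eqref{eq:inverse-transform}. Differentiating $w$ in $t$, inserting the plant \eqref{eq:parabolicMain1}, integrating by parts twice in the spatial argument, and using the kernel identities for $\breve{k}$ (in particular $\breve{k}_{xx}-\breve{k}_{yy}=\hat\lambda(y)\breve{k}$, $\breve{k}(x,0,t)=0$, and $\breve{k}(x,x,t)=-\tfrac12\int_0^x\hat\lambda(y,t)\,dy$), I expect the perturbed target system
\[
 w_t = w_{xx} + \tilde\lambda(x,t)u(x,t) - \int_0^x\!\breve{k}_t(x,y,t)u(y,t)\,dy - \int_0^x\!\tilde\lambda(y,t)\breve{k}(x,y,t)u(y,t)\,dy,
\]
with $w(0,t)=0$ and $w(1,t)=0$, the latter being precisely what the feedback \eqref{eq:perfectFeedback} (using $\breve{k}(1,\cdot,t)$) enforces; here $\tilde\lambda:=\lambda-\hat\lambda$ is the parameter error. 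Two perturbations appear: the $\tilde\lambda$-weighted integrals, to be absorbed by the update law, and the kernel-motion term $\int_0^x\breve{k}_t u$, to be rendered negligible through small $\gamma$.

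Next I would take
\[
 V(t)=\tfrac12\ln\!\bigl(1+\|w(t)\|^2\bigr)+\tfrac{1}{2\gamma}\int_0^1\tilde\lambda(x,t)^2\,dx,
\]
the logarithm matching the $1+\|w\|^2$ normalization in \eqref{eq:lambdaEstimator2}. Computing $\dot V$: the $w_{xx}$-term integrates by parts to $-\|w_x\|^2/(1+\|w\|^2)$ (boundary terms vanish); Fubini on the two $\tilde\lambda$-weighted integrals collapses them to $\tfrac1{1+\|w\|^2}\int_0^1\tilde\lambda(x)u(x)\bigl(w(x)-\int_x^1\breve{k}(y,x,t)w(y)\,dy\bigr)dx$, which is exactly $\tfrac1\gamma\int_0^1\tilde\lambda\,\phi\,dx$ by \eqref{eq:lambdaEstimator2}; and from the identity $\dot V_2=-\tfrac1\gamma\int_0^1\tilde\lambda\hat\lambda_t\,dx$ together with the projection inequality $\tilde\lambda\bigl(\mathrm{Proj}(\phi,\hat\lambda)-\phi\bigr)\ge0$, these indefinite terms cancel, leaving $\dot V\le\bigl(-\|w_x\|^2+\bigl|\int_0^1 w(x)\int_0^x\breve{k}_t(x,y,t)u(y,t)\,dy\,dx\bigr|\bigr)/(1+\|w\|^2)$.

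Then I would estimate the residual: $\bigl|\int_0^1 w\int_0^x\breve{k}_t u\bigr|\le\|\breve{k}_t\|_\infty\|w\|\,\|u\|\le C_l\|\breve{k}_t\|_\infty\|w\|^2$ via the Volterra bound $\|u\|\le C_l\|w\|$ from \eqref{eq:inverse-transform}; a continuous-dependence estimate for the kernel PDE differentiated in $t$ gives $\|\breve{k}_t\|_\infty\le C(\bar\lambda)\|\hat\lambda_t\|$, and since $|\hat\lambda_t|\le|\phi|$ pointwise, \eqref{eq:lambdaEstimator2} yields $\|\hat\lambda_t\|\le\gamma C'$ with $C'$ depending only on $\bar\lambda$ and on a priori bounds for $\|u\|_\infty$ and $\|w\|$. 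With Poincar\'e $\|w\|^2\le c_P\|w_x\|^2$ (legitimate since $w(0,t)=0$), this yields $\dot V\le-(1-\gamma C^*)\|w_x\|^2/(1+\|w\|^2)$; setting $\gamma^*:=1/C^*$, for $\gamma\in(0,\gamma^*)$ we get $\dot V\le0$. Hence $V$ is nonincreasing, so $\|w(t)\|$ and $\tilde\lambda(\cdot,t)$ (therefore $\hat\lambda(\cdot,t)$) stay bounded and $\int_0^\infty\|w_x(t)\|^2/(1+\|w(t)\|^2)\,dt<\infty$.

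For the regulation I would then establish well-posedness and regularity of the closed loop --- propagating $u(\cdot,t)\in H^2$ from the compatible $u_0\in H^2$, and keeping $\hat\lambda(\cdot,t)\in C^1$ so that $\breve{k},\breve{k}_t$ remain defined and uniformly bounded --- then verify uniform continuity of $t\mapsto\|w_x(t)\|^2/(1+\|w(t)\|^2)$, so Barbalat's lemma gives $\|w_x(t)\|\to0$, hence $\|w(t)\|\to0$ by Poincar\'e, hence $\|u(t)\|\to0$ through \eqref{eq:inverse-transform}; Agmon's inequality $\|u(t)\|_\infty^2\le2\|u(t)\|\,\|u_x(t)\|$ with $\|u_x(t)\|$ bounded then upgrades this to $\sup_{x\in[0,1]}|u(x,t)|\to0$. \textbf{The hard part} is the circular coupling here: the kernel-motion perturbation is tamed only because $\hat\lambda_t=O(\gamma)$, which itself presupposes a priori state bounds and bounds on $\breve{k}_t$ whose derivation must be interleaved with the Lyapunov estimate (a bootstrap over the maximal existence interval), together with the regularity and uniform-continuity bookkeeping required to invoke Barbalat rigorously.
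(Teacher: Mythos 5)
Your proposal is essentially the proof this theorem rests on: the paper itself states Theorem 1 without proof (it is quoted from \cite{smyshlyaevBook}), and your route --- the time-varying backstepping transform yielding the target system with the $\tilde\lambda$-weighted and $\int_0^x\breve{k}_t u$ perturbations, the normalized Lyapunov function $\tfrac12\ln(1+\|w\|^2)+\tfrac1{2\gamma}\|\tilde\lambda\|^2$ with the projection inequality cancelling the indefinite terms, smallness of $\gamma$ against the kernel-motion term via $\|\breve{k}_t\|\le M\|\hat\lambda_t\|$, and the Barbalat/Agmon chain for regulation --- is exactly the cited Lyapunov argument and the same template the paper reuses for its own Theorem \ref{thm:main-result}. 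The one refinement worth noting is that the normalization $1+\|\hat w\|^2$, together with $\|u\|\le(1+\bar l)\|w\|$ and $\|w\|_\infty\le\|w_x\|$ (since $w(0,t)=0$), gives the a priori estimate $\|\hat\lambda_t\|\le\gamma(1+\bar l)(1+\bar k)$ used in the paper's main proof, so the $\breve{k}_t$ term can be absorbed directly into $-\|w_x\|^2$ and the circular bootstrap on $\|u\|_\infty$ that you flag as the hard part is not actually needed.
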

\section{Properties of gain kernel PDE} \label{sec:kernel-properties}
In order to approximate the gain kernel PDE for various $\hat{\lambda}$ estimates, we need to prove that both the time and spatial derivatives are well-defined, continuous and bounded. The exact use of these Lemmas will become clear in invoking the universal operator approximation theorem in Section \ref{sec:universal-approx} and the proof of our main result in Section \ref{sec:main-result}. We will employ the standard approach of successive approximations on the integral representation of the kernel PDE. Thus, recall the integral representation of the kernel PDE as
\begin{align}
    G(\xi, \eta, t) &=& -\frac{1}{4} \int_\eta^\xi \hat{\lambda}\left(\frac{s}{2}, t)\right) ds \nonumber \\ && + \frac{1}{4} \int_\eta^\xi \int_0^\eta \hat{\lambda}\left(\frac{\sigma - s}{2}, t \right) G(\sigma, s, t) ds d\sigma \label{eq:integralFormOfKernel} \,,
\end{align}
where 
\begin{eqnarray}
    \xi &=& x+y, \quad \eta = x-y\,, \label{eq:integralFormTransform1} \\ 
    G(\xi, \eta, t) &=& \breve{k}\left(\frac{\xi+\eta}{2}, \frac{\xi-\eta}{2}, t \right), \quad (x, y) \in \mathcal{T}\,. \label{eq:integralFormTransform2}
\end{eqnarray}
Then, for the reader's convenience, we briefly recall the following lemma whose proof is well known in the backstepping literature. 

\begin{lemma} (\cite{1369395} Existence and bound for gain kernel.)
    Let $\bar{\lambda} > 0$ such that $\|\hat{\lambda}\|_\infty \leq \bar{\lambda}$ for all $(x, t) \in [0,1] \times \rplus$. Then, for any fixed $t \in \rplus$, and for every $\hat{\lambda}(x, t) \in C^1([0, 1])$ at fixed $t$, the kernel governed by the PDE \eqref{eq:nominalGainKernel1}, \eqref{eq:nominalGainKernel2}, \eqref{eq:nominalGainKernel3} with function $\hat{\lambda}$ has a unique $C^2(\mathcal{T})$ solution with the bound
    \begin{equation}
        \left\|\breve{k}(x, y, t)\right\|_\infty \leq \bar{\lambda}e^{2\bar{\lambda}x}. \label{eq:gainKernelBnd}
    \end{equation}
\end{lemma}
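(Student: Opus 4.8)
The plan is to work entirely with the equivalent integral representation \eqref{eq:integralFormOfKernel}, which follows from the Goursat problem \eqref{eq:nominalGainKernel1}--\eqref{eq:nominalGainKernel3} through the characteristic change of variables \eqref{eq:integralFormTransform1}--\eqref{eq:integralFormTransform2} and one integration in each characteristic direction; the passage between the PDE and \eqref{eq:integralFormOfKernel} is reversible because $\hat\lambda(\cdot,t)\in C^1([0,1])$ makes the transformed boundary data $G(\xi,\xi,t)=0$ and $G(\xi,0,t)=-\tfrac14\int_0^\xi\hat\lambda(s/2,t)\,ds$ twice continuously differentiable. Fixing $t$ and abbreviating $\hat\lambda=\hat\lambda(\cdot,t)$, I would introduce the successive approximations
\begin{align*}
G_0(\xi,\eta)&=-\frac14\int_\eta^\xi\hat\lambda\!\left(\frac{s}{2}\right)ds,\\
G_{n+1}(\xi,\eta)&=\frac14\int_\eta^\xi\!\!\int_0^\eta\hat\lambda\!\left(\frac{\sigma-s}{2}\right)G_n(\sigma,s)\,ds\,d\sigma,
\end{align*}
and set $G=\sum_{n\ge0}G_n$, so that $G$ solves \eqref{eq:integralFormOfKernel} as soon as this series converges uniformly on the image of $\mathcal T$.

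The core of the argument is the bound, proved by induction on $n$, that $|G_n(\xi,\eta)|\le\bar\lambda(\bar\lambda\xi)^n/n!$ on $\mathcal T$. The base case uses $\|\hat\lambda\|_\infty\le\bar\lambda$ and $\xi\le2$ to get $|G_0|\le\tfrac14\bar\lambda(\xi-\eta)\le\bar\lambda$; the inductive step inserts the hypothesis into the double integral, computes $\int_0^\eta ds=\eta$ and $\int_\eta^\xi\sigma^n\,d\sigma=(\xi^{n+1}-\eta^{n+1})/(n+1)$, and then uses $0\le\eta\le\xi\le2$ to absorb the leftover numerical factor. Summing the majorant gives $|G(\xi,\eta)|\le\bar\lambda\,e^{\bar\lambda\xi}$, and since $\xi=x+y\le2x$ on $\breve{\mathcal T}$, the inverse transformation \eqref{eq:integralFormTransform2} yields exactly $\|\breve k(x,y,t)\|_\infty\le\bar\lambda\,e^{2\bar\lambda x}$. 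Applying the same majorant to the difference of two candidate solutions, which satisfies the homogeneous Volterra equation, forces it to vanish, giving uniqueness.

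For the $C^2(\mathcal T)$ regularity I would differentiate \eqref{eq:integralFormOfKernel} directly: the fundamental theorem of calculus gives continuous $G_\xi,G_\eta$, hence continuous $G_{\xi\eta}=-\tfrac14\hat\lambda(\tfrac{\xi-\eta}{2})G$, and the pure second derivatives $G_{\xi\xi},G_{\eta\eta}$ come out continuous once $\hat\lambda\in C^1$ is used to differentiate the source term and the kernel of the double integral, the differentiated series being controlled by majorants analogous to the one above. Undoing the linear invertible change of variables transfers the $C^2$ regularity and the bound from $G$ to $\breve k$ on $\mathcal T$. I expect the only real friction to be cosmetic: picking the induction hypothesis in the $(\xi,\eta)$ coordinates so the constants collapse to exactly $\bar\lambda e^{2\bar\lambda x}$ rather than a looser exponential, and being precise about where $\hat\lambda\in C^1$ (rather than merely $C^0$) is actually needed in the regularity bootstrap. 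Since the statement is classical in the backstepping literature and is quoted from \cite{1369395}, no new ideas beyond this standard successive-approximation scheme should be required.
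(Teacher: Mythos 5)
Your proposal is correct and follows essentially the same route the paper relies on: the paper states this lemma without proof, citing \cite{1369395}, and the standard proof there (and the method the paper itself uses for the later lemmas on $\breve{k}_x$ and $\breve{k}_t$) is exactly your successive-approximation argument on the integral equation \eqref{eq:integralFormOfKernel}, whose majorant $\bar{\lambda}(\bar{\lambda}\xi)^n/n!$ sums to $\bar{\lambda}e^{\bar{\lambda}\xi}$ and, via $\xi=x+y\leq 2x$, gives precisely \eqref{eq:gainKernelBnd}, with uniqueness and $C^2$ regularity obtained as you describe. The only blemish is cosmetic: differentiating \eqref{eq:integralFormOfKernel} gives $G_{\xi\eta}=+\tfrac{1}{4}\hat{\lambda}\left(\tfrac{\xi-\eta}{2}\right)G$, not $-\tfrac{1}{4}\hat{\lambda}\left(\tfrac{\xi-\eta}{2}\right)G$, which does not affect the regularity or the bound.
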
 
\begin{lemma}(Existence and bound for $\breve{k}_x(x, x, t)$)
    Let $\bar{\lambda} > 0$ such that $\|\hat{\lambda}\|_\infty \leq \bar{\lambda}$, $\forall (x, t) \in [0, 1] \times \rplus$. Then, for any fixed $t \in \rplus$, $\breve{k}_x(x,x,t) \in C^1[0, 1]$ with the bound
    \begin{equation}
        \left \|\breve{k}_x(x, x, t)\right \|_\infty \leq \frac{1}{2}\bar{\lambda}. \label{eq:gainKernelSpatialDerivBnd}
    \end{equation}
\end{lemma}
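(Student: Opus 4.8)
The plan is to obtain $\breve k_x(x,x,t)$ --- the derivative along the diagonal of the trace $x\mapsto\breve k(x,x,t)$ --- directly from the diagonal boundary condition of the kernel PDE rather than from its interior equation. Fix $t\in\rplus$. By Lemma~1 the kernel $\breve k(\cdot,\cdot,t)$ lies in $C^2(\mathcal{T})$, so its restriction to the diagonal is a $C^2$ function of $x\in[0,1]$ and $\breve k_x(x,x,t)$ is classically well defined. Condition \eqref{eq:nominalGainKernel3}, with the online estimate $\hat\lambda(\cdot,t)$ in place of $\lambda$, gives the closed form
\[
  \breve k(x,x,t)=-\frac12\int_0^x\hat\lambda(y,t)\,dy.
\]
Since $\hat\lambda(\cdot,t)$ is continuous, the Fundamental Theorem of Calculus applies and differentiating along the diagonal yields
\[
  \breve k_x(x,x,t)=-\frac12\hat\lambda(x,t).
\]

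Both assertions of the lemma then follow immediately from this identity. For the regularity claim, $\breve k_x(x,x,t)=-\frac12\hat\lambda(x,t)$ inherits membership in $C^1([0,1])$ from $\hat\lambda(\cdot,t)$, which is $C^1$ in $x$ at each fixed $t$: the initial estimate is taken in $C^1([0,1])$ and the projection update law \eqref{eq:lambdaEstimator1}, \eqref{eq:lambdaEstimator2} propagates this spatial regularity, as in the well-posedness framework of Chapter~11 of \cite{smyshlyaevBook}. For the bound,
\[
  \left\|\breve k_x(x,x,t)\right\|_\infty=\frac12\sup_{x\in[0,1]}\left|\hat\lambda(x,t)\right|=\frac12\left\|\hat\lambda(\cdot,t)\right\|_\infty\le\frac12\bar\lambda,
\]
using the standing hypothesis $\|\hat\lambda\|_\infty\le\bar\lambda$.

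I do not expect a genuine obstacle here: once one extracts $\breve k_x(x,x,t)$ from \eqref{eq:nominalGainKernel3} rather than from \eqref{eq:nominalGainKernel1}, the statement reduces to the Fundamental Theorem of Calculus together with the a~priori bound on $\hat\lambda$. The only point that deserves a word of justification is the classical differentiability of the diagonal trace, which is furnished by the $C^2(\mathcal{T})$ regularity in Lemma~1; if one prefers a self-contained route, the same conclusion drops out of the integral representation \eqref{eq:integralFormOfKernel}: setting $\eta=0$ annihilates the double-integral term, leaving $G(\xi,0,t)=-\frac14\int_0^\xi\hat\lambda(s/2,t)\,ds$, so that $\breve k(x,x,t)=G(2x,0,t)$ and, by the chain rule, $\breve k_x(x,x,t)=2G_\xi(2x,0,t)=-\frac12\hat\lambda(x,t)$, recovering both the value and the bound.
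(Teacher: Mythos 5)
Your proposal is correct and follows essentially the same route as the paper: both extract the identity $\frac{\partial}{\partial x}\breve{k}(x,x,t)=-\tfrac{1}{2}\hat{\lambda}(x,t)$ from the diagonal boundary condition \eqref{eq:nominalGainKernel3} (equivalently, from \eqref{eq:integralFormOfKernel} at $\eta=0$) and then apply $\|\hat{\lambda}\|_\infty\le\bar{\lambda}$, with the regularity of the diagonal trace backed by the $C^2(\mathcal{T})$ result of Lemma~1. Your explicit handling of the $C^1[0,1]$ claim via the closed-form identity is, if anything, slightly more detailed than the paper's one-line remark.
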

\begin{proof}
    One can show the existence and continuity of this derivative on all of $(x, y) \in \mathcal{T}$ by differentiating \eqref{eq:integralFormOfKernel} and using the method of successive approximations. However, we only require a bound at the boundary condition $y=x$ and thus, one has
    \begin{eqnarray}
        \left|\frac{\partial}{\partial x}\breve{k}(x, x, t)\right| = \left|-\color{black}\frac{1}{2} \hat{\lambda}(x, t) \color{black} \right| \leq \frac{1}{2} \bar{\lambda}, \\ \quad \forall (x, t) \in [0,1] \times \rplus \nonumber\,.
    \end{eqnarray}
\end{proof}
\begin{lemma} \label{lem:ktbound} (Existence and bound for $\breve{k}_t(x, y, t)$) Let $\bar{\lambda} > 0$ such that $\|\hat{\lambda}\|_\infty \leq \bar{\lambda}$, $\forall (x, t) \in [0, 1] \times \rplus$. Then, for every $\hat{\lambda}(x, t) \in C^1([0, 1] \times \rplus)$, $\breve{k}_t(x, y, t)$ has a unique $C^0(\mathcal{T} \times \rplus)$ solution with the bound
\begin{eqnarray}
    \left \|\breve{k}_t(t)\right\| &\leq& M\|\hat{\lambda}_t(t)\|\,, \quad \forall t \geq 0\,, \label{eq:kt-bound-lem}\\
    M &=& e^{2\bar{\lambda}}(1+\bar{\lambda}e^{2\bar{\lambda}})\,. \label{eq:def-M}
\end{eqnarray}
\end{lemma}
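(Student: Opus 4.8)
The plan is to differentiate the integral representation \eqref{eq:integralFormOfKernel} in $t$ and to read the outcome as a linear Volterra-type integral equation for $H:=G_t$, carrying the \emph{same} integral operator as the equation for $G$ but an enlarged forcing term. Formally one expects
\begin{align*}
    H(\xi,\eta,t) &= -\frac14\int_\eta^\xi \hat\lambda_t\Big(\frac s2,t\Big)ds \\
    &\quad + \frac14\int_\eta^\xi\!\!\int_0^\eta \hat\lambda_t\Big(\frac{\sigma-s}{2},t\Big)G(\sigma,s,t)\,ds\,d\sigma \\
    &\quad + \frac14\int_\eta^\xi\!\!\int_0^\eta \hat\lambda\Big(\frac{\sigma-s}{2},t\Big)H(\sigma,s,t)\,ds\,d\sigma\,.
\end{align*}
I would first secure existence and continuity by the method of successive approximations on the same iteration $\{G^{(n)}\}$ that produces $G$ in Lemma~1. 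Since $\hat\lambda\in C^1([0,1]\times\rplus)$ jointly, each iterate $G^{(n)}$ is $C^1$ in $t$, and differentiating the iteration term by term yields $\{\partial_t G^{(n)}\}$ whose successive increments are dominated, uniformly on $\mathcal{T}\times[0,T]$ for every $T>0$, by the convergent numerical series with general term $(\bar\lambda\,\xi\eta/4)^n/(n!)^2$. Uniform convergence of both $\{G^{(n)}\}$ and $\{\partial_t G^{(n)}\}$ then licenses interchanging $\partial_t$ with the limit, so $G$ is $C^1$ in $t$ on $\mathcal{T}\times\rplus$, $G_t$ is continuous, and $G_t$ satisfies the displayed equation; transferring back through the linear invertible substitution \eqref{eq:integralFormTransform1}, \eqref{eq:integralFormTransform2} gives $\breve{k}_t\in C^0(\mathcal{T}\times\rplus)$, uniqueness being inherited from uniqueness of the fixed point.

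For the bound \eqref{eq:kt-bound-lem} I would estimate the forcing term $H_0$ (the first two integrals) first. Using $\xi-\eta=2y\le2$ and $\xi\le2$, the substitutions $\tau=s/2$ and $\tau=(\sigma-s)/2$, and Cauchy--Schwarz on $[0,1]$ (so that $\|\hat\lambda_t(t)\|_{L^1}\le\|\hat\lambda_t(t)\|$), the first integral is bounded by $\tfrac12\|\hat\lambda_t(t)\|$ and the second, invoking in addition $\|\breve{k}(t)\|_\infty\le\bar\lambda e^{2\bar\lambda}$ from Lemma~1, by $\bar\lambda e^{2\bar\lambda}\|\hat\lambda_t(t)\|$, so that $\|H_0\|_\infty\le\big(\tfrac12+\bar\lambda e^{2\bar\lambda}\big)\|\hat\lambda_t(t)\|$. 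Propagating $H_0$ through the successive-approximation series for $H$, whose $n$-th term is again bounded by $\|H_0\|_\infty(\bar\lambda\,\xi\eta/4)^n/(n!)^2$ with $\xi\eta=x^2-y^2\le1$, and summing, gives $\|\breve{k}_t(t)\|_\infty=\|H(t)\|_\infty\le e^{2\bar\lambda}\big(\tfrac12+\bar\lambda e^{2\bar\lambda}\big)\|\hat\lambda_t(t)\|\le M\|\hat\lambda_t(t)\|$ with $M$ as in \eqref{eq:def-M}; since $\mathcal{T}$ has Lebesgue measure at most $1$, the spatial $L^2$ norm is dominated by the supremum norm and \eqref{eq:kt-bound-lem} follows.

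The step I expect to be the main obstacle is not the estimate but the rigorous justification that $G_t$ genuinely solves the formal equation above, i.e.\ the legitimacy of differentiating \eqref{eq:integralFormOfKernel} under the integral sign; this reduces to the uniform-convergence argument for $\{\partial_t G^{(n)}\}$ sketched above and leans on the joint $C^1$ regularity of $\hat\lambda$. A secondary, bookkeeping-type difficulty is carrying the mixed $L^2$/$L^\infty$ norms through the nested integrals so that the resulting constant matches the clean expression $M$ in \eqref{eq:def-M} --- the computation above in fact leaves a comfortable margin.
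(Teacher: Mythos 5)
Your proposal is correct and follows the same skeleton as the paper's proof: differentiate the integral representation \eqref{eq:integralFormOfKernel} in $t$, treat the result as a Volterra-type equation for $G_t$ with forcing built from $\hat\lambda_t$ and $G$, run successive approximations for existence/continuity, and then estimate using the Lemma~1 bound $\|G\|_\infty\le\bar\lambda e^{2\bar\lambda}$. The differences are in the two end steps, and they are worth noting. For the quantitative bound, the paper enlarges the integration domain and invokes the Wendroff two-dimensional Gronwall inequality (Appendix~\ref{appendix:wendroff}) to pass from the integral inequality for $|G_t|$ to \eqref{eq:kt-bound-lem}; you instead sum the Volterra series directly with the forcing estimate $\|H_0\|_\infty\le(\tfrac12+\bar\lambda e^{2\bar\lambda})\|\hat\lambda_t(t)\|$ and the induction $|\Delta H^n|\le\|H_0\|_\infty(\bar\lambda\xi\eta/4)^n/(n!)^2$, $\xi\eta=x^2-y^2\le1$, which avoids the appendix lemma and in fact yields a slightly sharper constant $e^{\bar\lambda/4}(\tfrac12+\bar\lambda e^{2\bar\lambda})\le M$; both land within the stated $M$, and your reduction from the sup norm to the spatial $L^2$ norm (measure of $\mathcal{T}$ at most one) is the step the paper leaves implicit. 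For existence, the paper sets up an iteration directly on the differentiated equation and shows uniform convergence of its difference series, tacitly identifying the resulting fixed point with $\partial_t G$; you instead differentiate the $G$-iteration termwise and justify interchanging $\partial_t$ with the limit, which is precisely the justification the paper glosses over --- a genuine merit of your route. One bookkeeping caveat there: the increments $\partial_t(G^{(n+1)}-G^{(n)})$ obey a recursion containing both $\hat\lambda\,\partial_t\Delta G^{(n)}$ and $\hat\lambda_t\,\Delta G^{(n)}$, so they are not literally dominated by $(\bar\lambda\xi\eta/4)^n/(n!)^2$ but by a series with an extra polynomially growing factor (as in the paper's bounds \eqref{eq:delta0bnd}--\eqref{eq:deltagbnd} involving $\alpha(T)$); this does not affect uniform convergence on $\mathcal{T}\times[0,T]$, so the argument stands.
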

\begin{proof}
    We begin by showing existence and continuity. Differentiating \eqref{eq:integralFormOfKernel} with respect to $t$ yields
    \begin{eqnarray}
        G_t(\xi, \eta, t) &=& -\frac{1}{4}\int_\eta^\xi \hat{\lambda}_t\left(\frac{s}{2}, t \right) ds \nonumber \\ && + \frac{1}{4} \int_\eta^\xi \int_0^\eta \bigg[ \hat{\lambda}_t\left(\frac{\sigma - s}{2}, t \right) G(\sigma, s, t) \nonumber \\ && + \hat{\lambda}\left(\frac{\sigma - s}{2}, t \right) G_t(\sigma, s, t) \bigg] ds d\sigma\,. \label{eq:integralKernelDerivWRTTime}
    \end{eqnarray}
    Define the iterate sequence 
    % I USE ALGINNAT here because the eqnarray env sucks 
    % and doesnt fit them all on one line. 
    \begin{alignat}{3}
        G_t^0(\xi, \eta, t) &:=&-&\frac{1}{4} \int_\eta^\xi \hat{\lambda}_t\left(\frac{s}{2}, t \right) ds  \nonumber \\ 
        &&+&\frac{1}{4} \int_\eta^\xi \int_0^\eta \hat{\lambda}_t\left(\frac{\sigma - s}{2}, t \right) G(\sigma, s, t)ds d\sigma \,, \\ 
        G_t^{n+1}(\xi, \eta, t) &:= &&\frac{1}{4} \int_\eta^\xi \int_0^\eta \hat{\lambda}\left(\frac{\sigma - s}{2}, t \right) G_t^n(\sigma, s, t) ds d \sigma\,,
    \end{alignat}
    and consider the difference sequence 
    \begin{alignat}{3}
        \Delta G_t^0 = G_t^0 &=&-&\frac{1}{4} \int_\eta^\xi \hat{\lambda}_t\left(\frac{s}{2}, t \right) ds  \nonumber \\ 
        &&+&\frac{1}{4} \int_\eta^\xi \int_0^\eta \hat{\lambda}_t\left(\frac{\sigma - s}{2}, t \right) G(\sigma, s, t)ds d\sigma \,, \label{eq:deltag0} \\ 
        \Delta G_t^{n+1} = G_t^{n+1} - G_t^n &=& &\frac{1}{4} \int_\eta^\xi \int_0^\eta \hat{\lambda}\left(\frac{\sigma - s}{2}, t \right) \Delta G_t^n(\sigma, s, t) ds d \sigma\,. \label{eq:deltagn}
    \end{alignat}
    Then it is clear $G_t^n = \sum_{n=0}^\infty \Delta G_t^n$. We aim to show this series uniformly converges. For notational simplicity, we introduce the coefficient $\alpha(T)$ which specifies the max of $\hat{\lambda}$ and its derivative up to a time $T$.
    \begin{equation}
        \alpha(T) := \max \left\{\sup_{t \in [0, T]} \left\|\hat{\lambda}_t(\cdot, t)\right\|, \hat{\lambda}\right\}\,. \label{eq:coefficientFunc}
    \end{equation}
    Then, for any $T \geq t$, the difference sequence satisfies the following bounds
    \begin{eqnarray}
        \left|\Delta G_t^0\right| &\leq& \alpha(T) \left(\frac{1}{2} + \alpha(T) e^{2\alpha(T)}\right) \,, \label{eq:delta0bnd} \\ 
        \left|\Delta G_t^{n}\right|  &\leq& \left(\frac{1}{2} + \alpha(T)e^{2\alpha(T)}\right) \frac{\alpha(T)^{n+1}(\eta+\xi)^n}{n!} \label{eq:deltagbnd}\,.
    \end{eqnarray}
    The first bound comes from applying \eqref{eq:coefficientFunc} to \eqref{eq:deltag0}. The second bound can be shown via induction.  Assume \eqref{eq:deltagbnd} holds for $G_t^n$. Then, substituting into \eqref{eq:deltagn} yields
    \begin{eqnarray}
        \left| \Delta G_t^{n+1}\right| &\leq& \frac{1}{4} \int_\eta^\xi \int_0^\eta \hat{\lambda} \left(\frac{\sigma - s}{2}, t \right) \nonumber \\&& \times \bigg(\left(\frac{1}{2} + \alpha(T)e^{2\alpha(T)}\right)  \nonumber
        \\ && \times \frac{\alpha(T)^{n+1}(\eta+\xi)^n}{n!} \bigg) ds d\sigma \nonumber  \\ 
        &\leq& \frac{1}{4} \left(\left(\frac{1}{2} + \alpha(T)e^{2\alpha(T)} \right) \frac{\alpha(T)^{n+2}}{n!} \right) \nonumber \\ && \times \int_\eta^\xi \int_0^\eta (\sigma + s)^{n+1} ds d\sigma \nonumber \\
        &\leq&  \frac{1}{4} \bigg(\left(\frac{1}{2} + \alpha(T)e^{2\alpha(T)} \right) \nonumber \\ && \times  \frac{\alpha(T)^{n+2}(\eta + \xi)^{n+1}}{(n+1)!} \bigg)\,.
    \end{eqnarray}
    Thus, the series $\sum_{n=0}^\infty \Delta G_t^n$ uniformly converges on $\mathcal{T} \times [0, T]$ for $T \geq t$. Since there is a one-to-one correspondence between $G$ and $\breve{k}$, this implies the existence and continuity of $\breve{k}_t$ on $\mathcal{T} \times \mathbb{R}_+$. One could find a bound for $\breve{k}_t$ by analyzing the convergence bound on the series in \eqref{eq:delta0bnd}, \eqref{eq:deltagbnd}, but for the future Lyapunov analysis, it is easier to work with a bound on $\breve{k}$ in terms of the estimate $\hat{\lambda}$. 

    From \eqref{eq:integralKernelDerivWRTTime}, the triangle inequality, Cauchy Schwarz, and the substitution of \eqref{eq:gainKernelBnd} we obtain 
    \begin{eqnarray}
        \left|G_t(\xi, \eta, t)\right| &=& \Bigg|-\frac{1}{4}\int_\eta^\xi \hat{\lambda}_t\left(\frac{s}{2}, t \right) ds \nonumber \\ && + \frac{1}{4} \int_\eta^\xi \int_0^\eta \bigg[ \hat{\lambda}_t\left(\frac{\sigma - s}{2}, t \right) G(\sigma, s, t) \nonumber \\ && + \hat{\lambda}\left(\frac{\sigma - s}{2}, t \right) G_t(\sigma, s, t) \bigg] ds d\sigma\,\bigg| \\
        &\leq& \left\|\hat{\lambda}_t\right\| + \left\|\hat{\lambda}_t\right\| \bar{\lambda} e^{2 \bar{\lambda}} \nonumber \\ &&+ \bar{\lambda} \int_\eta^\xi \int_0^\eta |G_t(\sigma, s, t)| ds d\sigma \,.
    \end{eqnarray}
    Now, applying Fubini's theorem and noting the integrand is always non-negative yields
    \begin{eqnarray}
        \left|G_t(\xi, \eta, t)\right|  &=&  \left\|\hat{\lambda}_t\right\| + \left\|\hat{\lambda}_t\right\| \bar{\lambda} e^{2 \bar{\lambda}} \nonumber \\ &&+ \bar{\lambda} \int_\eta^\xi \int_0^\eta |G_t(\sigma, s, t)| ds d\sigma  \\ 
        &\leq& \left\|\hat{\lambda}_t\right\| + \left\|\hat{\lambda}_t\right\|\bar{\lambda} e^{2 \bar{\lambda}} \nonumber \\ &&+ \bar{\lambda} \int_0^\eta \int_0^\xi |G_t(\sigma, s, t)| ds d\sigma\,.    
    \end{eqnarray}
    Now we apply the Wendroff inequality (see \ref{appendix:wendroff}) first given in the book by Beckenbach and Bellman \cite{Beckenbach_Bellman_1961} as an extension to Gronwall's inequality in 2D to obtain the following result for $G_t$
    \begin{eqnarray} \label{eq:kt-bound}
        |G_t(\xi, \eta, t)| \leq \left\|\hat{\lambda}_t\right\| (1+\bar{\lambda}e^{2\bar{\lambda}}) e^{2\bar{\lambda}}\,. 
    \end{eqnarray}
    The final result then holds for $\breve{k}_t$ given $G_t$ is an equivalent integral representation of the time derivative of the kernel PDE. 
\end{proof}
\section{Neural operator approximation of the gain kernel} \label{sec:universal-approx}
We aim to approximate the kernel mapping $\hat{\lambda} \mapsto k$ by a neural operator and thus begin by presenting a general universal approximation theorem for the nonlocal neural operator - a unifying framework that encompasses a series of operator learning architectures including both the popular FNO \cite{li2021fourier} and DeepONet \cite{Lu2021} frameworks and their extensions such as NOMAD \cite{seidman2022nomad} and the Laplace NO \cite{cao2023lno}. We give the details of the nonlocal-neural operator architecture in \ref{appendix:nonlocal} (as well as its connections to FNO and DeepONet) and refer the reader to \cite{lanthaler2023nonlocal} for further details. 

\begin{theorem} \label{thm:nnoUniversalApprox} (\cite[Theorem 2.1]{lanthaler2023nonlocal} Neural operator approximation theorem.) Let $\Omega_u \subset \mathbb{R}^{d_{u_1}}$ and $\Omega_v \subset \mathbb{R}^{d_{v_1}}$ be two bounded domains with Lipschitz boundary. Let $\mathcal{G}: C^0(\overline{\Omega_u};\mathbb{R}^{d_{u_2}}) \rightarrow C^0(\overline{\Omega_v}; \mathbb{R}^{d_{v_2}}) $ be a continuous operator and fix a compact set $K \subset C^0(\overline{\Omega_u};\mathbb{R}^{d_{u_2}})$. Then for any $\epsilon > 0$, there exists a nonlocal neural operator $\hat{\mathcal{G}}: K \rightarrow C^0(\overline{\Omega_v}; \mathbb{R}^{d_{v_2}})$ such that 
\begin{equation}
    \sup_{u \in K} |\mathcal{G}(u)(y) - \hat{\mathcal{G}}(u)(y)|\leq \epsilon\,,
\end{equation}
for all values $y \in \Omega_v$.
\end{theorem}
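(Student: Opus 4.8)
The plan is to follow the argument of \cite{lanthaler2023nonlocal}, which reduces the infinite-dimensional approximation problem to a finite-dimensional one via an encode--process--decode decomposition and then invokes the classical universal approximation theorem for finite-dimensional neural networks. First, because $K$ is compact in $C^0(\overline{\Omega_u};\mathbb{R}^{d_{u_2}})$, for every $\delta>0$ one constructs a finite-rank encoder $E:C^0(\overline{\Omega_u};\mathbb{R}^{d_{u_2}})\to\mathbb{R}^m$ and a reconstruction $R:\mathbb{R}^m\to C^0(\overline{\Omega_u};\mathbb{R}^{d_{u_2}})$ with $\sup_{u\in K}\|R(E(u))-u\|_{\infty}\le\delta$; concretely $E$ collects finitely many linear functionals (mollified point samples on a fine grid) and $R$ is a partition-of-unity interpolation. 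The structural point where the nonlocal neural operator framework enters is that $E$ and $R$ are realizable within the admissible operations of the architecture: the lifting layer together with a single nonlocal integral layer $v\mapsto\int\kappa(\cdot,z)v(z)\,dz$ and a pointwise nonlinearity can compute the required functionals and spread them back out as coefficient channels.

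Second, I would transfer the approximation through $\mathcal{G}$ by uniform continuity: since $\mathcal{G}$ is continuous and $K$ compact, $\sup_{u\in K}\|\mathcal{G}(u)-\mathcal{G}(R(E(u)))\|\le\omega(\delta)$ for a modulus of continuity $\omega$ with $\omega(\delta)\to0$. On the output side $\mathcal{G}(K)$ is a compact subset of $C^0(\overline{\Omega_v};\mathbb{R}^{d_{v_2}})$, so the same finite-rank-reconstruction argument gives a decoder $D:\mathbb{R}^{m'}\to C^0(\overline{\Omega_v};\mathbb{R}^{d_{v_2}})$ (a trunk-type expansion in the output variable, again expressible through lifting and a nonlocal layer) and a projection $P$ with $\sup_{g\in\mathcal{G}(K)}\|D(P(g))-g\|\le\delta'$.

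Third, the reduced map $P\circ\mathcal{G}\circ R$ restricted to the compact set $E(K)\subset\mathbb{R}^m$ is continuous between Euclidean spaces, so by the classical universal approximation theorem it is uniformly approximated on $E(K)$ by a feedforward network $N:\mathbb{R}^m\to\mathbb{R}^{m'}$, which acts channelwise and is therefore an admissible pointwise layer. Setting $\hat{\mathcal{G}}:=D\circ N\circ E$ and chaining the three error contributions with the triangle inequality yields the claimed bound after choosing $\delta,\delta'$ and the network size small/large enough.

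The main obstacle, and the technical core of \cite{lanthaler2023nonlocal}, is precisely the claim used above: that encoders and decoders rich enough to resolve every element of the compact sets $K$ and $\mathcal{G}(K)$ to within the target tolerance can be emulated by the restricted operations of a nonlocal neural operator (lifting, nonlocal integral layers with their averaging component, and pointwise nonlinearities) rather than by arbitrary linear functionals; everything else is bookkeeping with compactness and uniform continuity. Since in this paper the theorem is only invoked as a black box --- with $\mathcal{G}$ the kernel operator $\hat{\lambda}\mapsto\breve{k}$, whose continuity into $C^0(\mathcal{T})$ follows from the preceding lemmas and the successive-approximation estimates --- we will simply cite the result.
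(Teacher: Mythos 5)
This statement is imported verbatim from \cite[Theorem 2.1]{lanthaler2023nonlocal}; the paper offers no proof of its own and uses it purely as a black box, exactly as you conclude at the end of your proposal. Your encode--process--decode sketch is a fair summary of the cited reference's argument, so your treatment matches the paper's approach and nothing further is needed.
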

For readers familiar with the previous explorations of neural operators in approximating kernel gain functions \cite{bhan_neural_2023}, we provide the following corollary for DeepONet. 
\begin{corollary} (DeepONet universal approximation theorem; first proven in \cite{lu2021advectionDeepONet})
Consider the setting of Theorem \ref{thm:nnoUniversalApprox}. Then, for all $\epsilon > 0$, there exists $p^*$, $m^*$ such that for all $p \geq p^*$, $m\geq m^*$, there exists neural network weights $\varphi^{(k)}$, $\theta^{(k)}$ such that the neural networks (see \cite{bhan_neural_2023}, section III for definition) $g^\mathcal{N}$ and $f^\mathcal{N}$ in the DeepONet given by 
\begin{eqnarray}
    G_\mathbb{N}(\bm{u}_m)(y) = \sum_{k=1}^p g^\mathcal{N}(\bm{u}; \varphi^{(k)})f^\mathcal{N}(y; \theta^{(k)})\,,
\end{eqnarray}
satisfy 
\begin{eqnarray}
    \sup_{u \in K} |\mathcal{G}(u)(y) - \mathcal{G}_\mathbb{N}(u)(y)|\leq \epsilon\,,
\end{eqnarray}
for all values $y \in \Omega_v$.
\end{corollary}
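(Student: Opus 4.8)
The plan is to obtain the corollary as a specialization of Theorem~\ref{thm:nnoUniversalApprox}, using the fact that a DeepONet is a particular nonlocal neural operator and then tracking the finite-dimensional hyperparameters $p$ (number of branch/trunk pairs) and $m$ (number of input sensor locations) through the construction. First I would recall, from the architecture description in \ref{appendix:nonlocal}, that the DeepONet $\mathcal{G}_\mathbb{N}$ of the stated separable form is exactly a nonlocal neural operator in which the input function is encoded by point evaluation at $m$ sensors to form $\bm{u}_m$, a single nonlocal/averaging layer followed by a finite-width network produces the coefficient vector $\big(g^\mathcal{N}(\bm{u};\varphi^{(k)})\big)_{k=1}^p$, and the reconstruction layer is the linear pairing of these coefficients against the trunk features $\big(f^\mathcal{N}(y;\theta^{(k)})\big)_{k=1}^p$. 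Consequently, every estimate available for nonlocal neural operators applies to DeepONets once one verifies that the encoding step incurs no irreducible loss of accuracy as $m\to\infty$.

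Second, I would apply Theorem~\ref{thm:nnoUniversalApprox} with $\Omega_u=[0,1]$, $\Omega_v=\mathcal{T}$ (a bounded domain with Lipschitz boundary), $\mathcal{G}$ the kernel operator $\hat{\lambda}\mapsto k$ — which is continuous on the relevant compact set by the existence and bound results for the gain kernel established in Section~\ref{sec:kernel-properties} — and $K$ the compact set of admissible reaction coefficients. This yields a nonlocal neural operator $\hat{\mathcal{G}}$ with $\sup_{u\in K}|\mathcal{G}(u)(y)-\hat{\mathcal{G}}(u)(y)|\le \epsilon/2$ for all $y\in\Omega_v$. I would then invoke the standard finite-width neural-network universal approximation theorem to realize $\hat{\mathcal{G}}$ up to an additional $\epsilon/2$ by a DeepONet: choosing the trunk width and $p\ge p^*$ to approximate the (continuous, compactly supported) reconstruction/feature maps by $f^\mathcal{N}(\cdot;\theta^{(k)})$, choosing the branch widths and the same $p$ to approximate the coefficient maps by $g^\mathcal{N}(\cdot;\varphi^{(k)})$, and choosing $m\ge m^*$ so that, by equicontinuity of $K\subset C^0$, the sensor encoding $\bm{u}_m$ reproduces $u$ finely enough uniformly over $K$. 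The triangle inequality then gives the claimed $\epsilon$ bound for all $u\in K$ and all $y\in\Omega_v$.

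The main obstacle is not any individual estimate but the bookkeeping of which approximation error is controlled by which hyperparameter, and ensuring the thresholds $p^*$, $m^*$ are uniform over $u\in K$ and $y\in\Omega_v$: $m$ governs the encoder (input-sampling) error and must be handled via compactness/equicontinuity of $K$, whereas $p$ simultaneously governs the rank of the separable trunk--branch expansion and the widths of the two networks $g^\mathcal{N}$, $f^\mathcal{N}$. Threading these dependencies together is precisely the content of the original DeepONet approximation argument, so in practice the cleanest route is to cite the construction of \cite{lu2021advectionDeepONet} and merely check that our operator $\mathcal{G}$ and compact set $K$ satisfy its hypotheses, rather than re-deriving the two-network approximation from scratch.
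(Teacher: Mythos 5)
Your proposal is correct in substance and ends up resting on the same ultimate justification as the paper, but the route differs in one structural point worth noting. The paper does not re-derive the corollary: it cites the original DeepONet result of \cite{lu2021advectionDeepONet} and, in \ref{appendix:nonlocal}, shows that a DeepONet \emph{exactly} realizes the nonlocal-neural-operator architecture (branch network as the averaged lifting $\sigma\left(\sum_{j} R(u(x_j),x_j)\right)$, trunk network as the functions $\phi_{l,m}$, identity projection), so the universality of Theorem \ref{thm:nnoUniversalApprox} transfers to the DeepONet subclass with no additional approximation layer. You instead run a two-step argument: obtain an NNO within $\epsilon/2$ from Theorem \ref{thm:nnoUniversalApprox}, then approximate that NNO by a DeepONet within $\epsilon/2$ (finite-width universal approximation for the coefficient and feature maps, sensor density via compactness/equicontinuity of $K$). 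This works and makes the roles of $p$ and $m$ explicit, and it correctly addresses the subtlety you yourself flag --- membership of DeepONets in the NNO class alone does not confer universality of the subclass, so either your extra approximation step or the paper's exact realization of the specific universal NNO is genuinely needed --- but it amounts to re-deriving the encoder/approximator/reconstructor argument of the original DeepONet proof, which the paper simply cites. One small scope slip: the corollary is stated in the general setting of Theorem \ref{thm:nnoUniversalApprox} (arbitrary continuous $\mathcal{G}$ and compact $K$), whereas your second paragraph instantiates $\Omega_u=[0,1]$, $\Omega_v=\mathcal{T}$ and $\mathcal{G}$ the kernel operator; nothing in your argument uses that specialization, so it should be stated (and is valid) for general $\mathcal{G}$ and $K$.
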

Note that Theorem \ref{thm:nnoUniversalApprox} has two main assumptions. First, the input function space is required to be compact. Second, the operator mapping to be approximated must be continuous. We now introduce the operator for $\breve{k}$ that we aim to approximate, noting that the operator output includes $\breve{k}$ as well as its derivatives $\breve{k}_x, \breve{k}_{xx}, \breve{k}_t$ which is needed for proving stability under the neural operator approximation. 

Define the set of functions
\begin{equation}
    \underline{K} = \{k \in C^2_{x, y}C_t^1(\mathcal{T} \times \rplus)|k(x, 0, t) = 0\}, \quad \forall x \in [0, 1], t  \in \rplus\}\,.
\end{equation} 
Let $\Lambda$ be a compact set of $C^0([0, 1])$ with the supremum norm such that for every $\lambda \in \Lambda$, $\|\lambda\|_\infty < M$ and $\lambda$ is $R-$Lipschitz where $M, R > 0$ are constants that can be as large as needed. Then, denote the operator $\mathcal{K}: \Lambda \rightarrow \underline{K}$ as 
\begin{equation}
    \mathcal{K}(\hat{\lambda}(\cdot, t)) := \breve{k}(x, y, t)\,.
\end{equation}
Further, define the operator $\mathcal{M}: \Lambda \rightarrow \underline{K} \times C^1([0, 1] \times \rplus) \times C^0_{x, y} C^1_t (\mathcal{T} \times \rplus)$ such that 
\begin{equation}
    \mathcal{M}(\hat{\lambda}(\cdot, t)) := (\breve{k}(x, y, t), \kappa_1(x, t), \kappa_2(x, y, t))\,,
\end{equation}
where
\begin{eqnarray}
    \kappa_1(x, t) &=& 2 \frac{\partial}{\partial x} (\breve{k}(x, x, t)) + \hat{\lambda}(x, t)\,, \\ 
    \kappa_2(x, y, t) &=& \breve{k}_{xx}(x, y, t) - \breve{k}_{yy}(x, y, t) \nonumber \\ && - \hat{\lambda}(y, t) \breve{k}(x, y, t)\,.
\end{eqnarray}
%Figure 1. Simulation of the plant (1), (2), (3) with openloop controller
For the Lyapunov analysis that follows, we also need to include $\breve{k}_t$ in our approximation. Thus, define the operator $\mathcal{K}_1: \Lambda^2 \rightarrow C^2_{x, y}C^0_t(\mathcal{T} \times \rplus)$ such that
\begin{equation}
    \mathcal{K}_1(\hat{\lambda}(\cdot, t), \hat{\lambda}_t(\cdot, t)) := \breve{k}_t(x, y, t)\,,
\end{equation}
where under the transformations $\xi = x+y, \eta=x-y,  \forall(x, y) \in \mathcal{T}$, we have $\breve{k}_t$  is the solution to the integral equation  
\begin{alignat}{3}
    \breve{k}_t\left(\frac{\xi+\eta}{2}, \frac{\xi-\eta}{2}\right) &=&& G_t(\xi, \eta, t)\,, \\ 
    0 &=& -& G_t(\xi, \eta, t) -\frac{1}{4}\int_\eta^\xi \hat{\lambda}_t\left(\frac{s}{2}, t \right) ds  \nonumber \\ 
    &&+& \frac{1}{4} \int_\eta^\xi \int_0^\eta \bigg[ \hat{\lambda}_t \left(\frac{\sigma - s}{2}, t \right) G(\sigma, s, t) \nonumber \\ &&+&\hat{\lambda}\left(\frac{\sigma - s}{2}, t \right) G_t(\sigma, s, t) \bigg] ds d\sigma\,, 
\end{alignat}
where $G$ is given by \eqref{eq:integralFormOfKernel}. 

Lastly, consider the composition of the operators $\mathcal{M}$ and $\mathcal{K}_1$ given as $\mathcal{N}: \Lambda^2 \rightarrow \underline{K} \times C^1([0, 1] \times \rplus) \times C^0_{x, y}C^1_t(\mathcal{T} \times \rplus) \times C^2_{x, y}C_t^0(\mathcal{T} \times \rplus)$ such that 
\begin{equation}
    \mathcal{N}(\hat{\lambda}(\cdot, t), \hat{\lambda}_t(\cdot, t)) := (\mathcal{M}(\hat{\lambda}(\cdot, t)), \mathcal{K}_1(\hat{\lambda}(\cdot, t), \hat{\lambda}_t(\cdot, t)))\,.
\end{equation}
We now aim to approximate the operator $\mathcal{N}$ which requires continuity of the operator. First, we note that $\mathcal{M}$ was shown to be continuous in \cite[Theorem 4]{krstic2023neural}. Thus, it suffices to show $\mathcal{K}_1$ is continuous.
\begin{lemma}
    Fix $t\geq 0$. Let $\lambda_1(\cdot, t)$, $\lambda_2(\cdot, t) \in \Lambda$ and $k_1 = \mathcal{K}(\lambda_1), k_2=\mathcal{K}(\lambda_2)$. Then, $\mathcal{K}_1$ is Lipschitz continuous. Explicitly, there exists a Lipschitz constant $A > 0$ such that 
    \begin{equation}
        \left\|\mathcal{K}_1\left(\lambda_1, \frac{\partial}{\partial t} \lambda_1 \right) - \mathcal{K}_1\left(\lambda_2, \frac{\partial}{\partial t} \lambda_2 \right)\right\|_\infty \leq A \|\lambda_1 - \lambda_2\|_\infty. 
    \end{equation}
\end{lemma}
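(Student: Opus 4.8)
The plan is to transport the successive-approximation/Wendroff argument from the proof of Lemma~\ref{lem:ktbound} to the \emph{difference} of two solutions. Working in the coordinates $\xi=x+y$, $\eta=x-y$, by \eqref{eq:integralFormTransform2} and Lemma~\ref{lem:ktbound} the object $\breve k_{i,t}$ corresponds to the function $G_{i,t}$ solving the Volterra equation \eqref{eq:integralKernelDerivWRTTime} with data $(\lambda_i,\partial_t\lambda_i)$, where $G_i$ is the kernel attached to $\lambda_i$ through \eqref{eq:integralFormOfKernel}. Before differencing I would assemble the a priori bounds that hold uniformly over the compact set $\Lambda^2$ (taking $\bar\lambda=M$): from \eqref{eq:gainKernelBnd}, $\|G_i\|_\infty\le Me^{2M}$; from \eqref{eq:kt-bound}, $|G_{i,t}|\le\|\partial_t\lambda_i(\cdot,t)\|(1+Me^{2M})e^{2M}\le M(1+Me^{2M})e^{2M}$, the last step using $\partial_t\lambda_i(\cdot,t)\in\Lambda$; and a Lipschitz bound $\|G_1-G_2\|_\infty\le C_0\|\lambda_1-\lambda_2\|_\infty$ for the kernel map itself, obtained by subtracting the two copies of \eqref{eq:integralFormOfKernel} and re-running the successive-approximation argument on $G_1-G_2$ (the Lipschitz sharpening of the continuity of $\mathcal{M}$ from \cite{krstic2023neural}, with $C_0$ depending only on $M$).

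Next I would subtract \eqref{eq:integralKernelDerivWRTTime} for $G_{2,t}$ from that for $G_{1,t}$, writing the equation for $\Delta G_t:=G_{1,t}-G_{2,t}$ as
\begin{equation*}
  \Delta G_t(\xi,\eta,t) = S(\xi,\eta,t) + \tfrac14\int_\eta^\xi\!\!\int_0^\eta \lambda_2\!\left(\tfrac{\sigma-s}{2},t\right)\Delta G_t(\sigma,s,t)\,ds\,d\sigma,
\end{equation*}
where the source $S$ collects the four terms $-\tfrac14\int_\eta^\xi(\partial_t\lambda_1-\partial_t\lambda_2)(\tfrac s2,t)\,ds$, $\tfrac14\int_\eta^\xi\!\!\int_0^\eta(\partial_t\lambda_1-\partial_t\lambda_2)(\tfrac{\sigma-s}{2},t)\,G_1\,ds\,d\sigma$, $\tfrac14\int_\eta^\xi\!\!\int_0^\eta \partial_t\lambda_2(\tfrac{\sigma-s}{2},t)\,(G_1-G_2)\,ds\,d\sigma$, and $\tfrac14\int_\eta^\xi\!\!\int_0^\eta(\lambda_1-\lambda_2)(\tfrac{\sigma-s}{2},t)\,G_{1,t}\,ds\,d\sigma$, each evaluated at $(\sigma,s,t)$. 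Using the a priori bounds above, $\|S\|_\infty\le A_0\big(\|\lambda_1-\lambda_2\|_\infty+\|\partial_t\lambda_1-\partial_t\lambda_2\|_\infty\big)$ with $A_0$ depending only on $M$.

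Finally I would close the estimate exactly as in Lemma~\ref{lem:ktbound}: apply Fubini to rewrite the triangular double integral (after taking absolute values) as an iterated integral over the enclosing rectangle, then invoke the Wendroff inequality of Appendix~\ref{appendix:wendroff} \cite{Beckenbach_Bellman_1961} to get $\|\Delta G_t\|_\infty\le\|S\|_\infty(1+Me^{2M})e^{2M}$; equivalently one reruns the difference series with zeroth term $S$ and factorially decaying tail, mirroring \eqref{eq:delta0bnd}--\eqref{eq:deltagbnd}. Transforming back via \eqref{eq:integralFormTransform2} gives the claim with explicit $A\propto A_0(1+Me^{2M})e^{2M}$. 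This is precisely the product-metric Lipschitz continuity of $\mathcal{K}_1$ on $\Lambda^2$ needed to invoke Theorem~\ref{thm:nnoUniversalApprox}; when the two inputs share the same $\partial_t$-component the first two source terms drop and one recovers the stated form $A\|\lambda_1-\lambda_2\|_\infty$.

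I expect the main obstacle to be bookkeeping rather than conceptual: establishing the \emph{Lipschitz} (not merely continuous) dependence $\|G_1-G_2\|_\infty\lesssim\|\lambda_1-\lambda_2\|_\infty$ with a constant uniform over $\Lambda$, on which the source of $\Delta G_t$ feeds, and ensuring the sup-bound on $G_{1,t}$ used inside $S$ is uniform over $\Lambda^2$ — which works only because $\partial_t\lambda_i(\cdot,t)$ itself lies in $\Lambda$ and is therefore bounded by $M$; were $\partial_t\lambda_i$ merely $L^2$, $A$ would have to grow with $\sup_t\|\partial_t\lambda_i(\cdot,t)\|$. The Wendroff closure itself is routine once Lemma~\ref{lem:ktbound} is in hand.
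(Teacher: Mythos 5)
Your proposal is correct and follows essentially the same route as the paper's proof: subtract the two integral (Volterra) equations for $G_{i,t}$, collect the same four source terms, bound them using the kernel bound \eqref{eq:gainKernelBnd}, the $G_t$ bound from Lemma~\ref{lem:ktbound}, and the Lipschitz dependence of $\delta G$ on $\delta\lambda$ from \cite{krstic2023neural}, then close via the successive-approximation series with factorial decay (equivalently Wendroff), arriving at a bound proportional to $\|\lambda_1-\lambda_2\|_\infty+\|\partial_t\lambda_1-\partial_t\lambda_2\|_\infty$, i.e.\ the paper's $\|\delta\lambda\|_\Lambda$. Your explicit remark on how the stated $A\|\lambda_1-\lambda_2\|_\infty$ form is recovered is, if anything, slightly more careful bookkeeping than the paper's own presentation.
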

\begin{proof}
Let $\lambda_1, \lambda_2 \in \Omega$. Let $k_1 = \mathcal{K}(\lambda_1)$, $k_2 = \mathcal{K}(\lambda_2)$ and denote $G_1, G_2$ the corresponding transforms of $k_1, k_2$ in the integral form $G_i = k_i\left(\frac{\xi+\eta}{2}, \frac{\xi - \eta}{2} \right)$, $i \in \{1, 2\}$. Then, we have the following
\begin{alignat}{3}
    \delta \lambda &=&& \lambda_1 - \lambda_2\,, \\ 
    \delta G &=&& G_1 - G_2\,, \\ 
    \delta G_t &= &-& \nonumber  \color{black} \frac{1}{4} \int_\eta^\xi \frac{ \partial (\delta \lambda)}{\partial t} \left(\frac{s}{2}, t \right) ds \\  \color{black}
    &&+& \color{black}\frac{1}{4} \int_\eta^\xi \int_0^\eta \bigg [ \frac{\partial (\delta \lambda)}{\partial t}\left(\frac{\sigma - s}{2}, t \right) G_1(\sigma, s, t) \nonumber \\  &&+& \color{black}\frac{\partial \lambda_2 \left(\frac{\sigma -s}{2} , t \right)}{\partial t} \delta G(\sigma, s, t) \nonumber \\ &&+& \color{black}\delta \lambda \left(\frac{\sigma-2}{2}\right) \frac{ \partial G_1}{\partial t}(\sigma, s, t) \nonumber \\ &&+& \color{black}\lambda_2\left(\frac{\sigma-2}{2}\right) \frac{\partial (\delta G)}{\partial t}(\sigma, s, t) \bigg] ds d\sigma \color{black}
\end{alignat}
In the proof that follows, will omit the arguments on $\lambda$ and $G$ for conciseness. Define the sequence
\begin{alignat}{3}
    \delta G_t^{n+1} &=& &\frac{1}{4} \int_\eta^\xi \int_0^\eta \lambda_2 \frac{\partial (\delta G)}{\partial t}^{\color{black}n\color{black}} ds d\sigma\,, \\ 
    \delta G_t^0 &=& -& \frac{1}{4} \int_\eta^\xi \frac{\partial (\delta \lambda)}{\partial t} ds \nonumber \\ &&+& \frac{1}{4} \int_\eta^\xi \int_0^\eta \bigg[ \color{black}\frac{\partial (\delta \lambda)}{\partial t} \color{black} G_1 + \color{black} \frac{\partial \lambda_2}{\partial t} \color{black}\delta G + \delta \lambda \frac{\partial G_1}{\partial t}\bigg] ds d\sigma\,.
\end{alignat}
Then, using \cite{krstic2023neural} the Lipschitz bound for $\delta G$, the fact $\lambda, \lambda_t \in \Lambda$, and the bound in Lemma \ref{lem:ktbound}, for any $T \geq t$, we attain 
\begin{eqnarray}
    \|\delta G_t^0 \|_\infty  &\leq& A \|\delta \lambda\|_\Lambda \,,\\ 
    |\delta G_t^n | &\leq& A \|\delta \lambda\|_\Lambda \frac{\alpha(T)^n \color{black}{(\xi - \eta)} \color{black}^n}{n!}\,, \\ 
    A(\alpha(T)) &>& 0\,,
\end{eqnarray}
\color{black}
where we introduce the norm $\|\lambda \|_\Lambda := \|\lambda \|_\infty + \|\lambda_t \|_\infty $. \color{black}
The result then follows from the uniform convergence of the series
\begin{eqnarray}
    \delta G_t &=& \sum_{n=0}^\infty \delta G_t^n\,, \\
    \|\delta G_t\|_\infty &=& A \|\delta \lambda\|_\Lambda e^{\alpha(T)}\,.
\end{eqnarray}
\end{proof}

Using the continuity of $\mathcal{N}$, we now invoke Theorem \ref{thm:nnoUniversalApprox} to approximate the composed kernel mapping.
\begin{theorem} \label{thm:operatorApproxOfKernel}
    Fix $t\geq 0$ and let $(\hat{\lambda}(\cdot, t), \hat{\lambda}_t(\cdot, t)) \in \Lambda^2$. Then for all $\epsilon > 0$, there exists a neural operator $\hat{\mathcal{N}}$ such that for all $(x, y) \in \mathcal{T}$
    \begin{alignat}{2}
        &|\mathcal{K}(\hat{\lambda}(\cdot, t) - \hat{\mathcal{K}}(\hat{\lambda})(\cdot, t)| \nonumber &\\
        &+|2 \partial_x(\mathcal{K}(\hat{\lambda})(x,x,t) - \hat{\mathcal{K}}(\hat{\lambda}))| \nonumber &\\ 
        &+|(\partial_{xx} - \partial_{yy})(\mathcal{K}(\hat{\lambda})(\cdot, t) - \hat{\mathcal{K}}(\hat{\lambda})(\cdot, t))\nonumber  \\ \nonumber  &- \hat{\lambda}(y)(\mathcal{K}(\hat{\lambda})(\cdot, t) - \hat{\mathcal{K}}(\hat{\lambda})(\cdot, t))| \nonumber & \\ 
        &+ |\mathcal{K}_1(\hat{\lambda}, \hat{\lambda}_t)(\cdot, t) - \hat{\mathcal{K}}_t(\hat{\lambda}, \hat{\lambda}_t)(\cdot, t)| &\leq \epsilon \,.
    \end{alignat}
\end{theorem}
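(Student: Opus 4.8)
The plan is to feed the composed operator $\mathcal{N}$ into Theorem~\ref{thm:nnoUniversalApprox} and then split the single $C^0$ error bound it returns across the four output channels. The two hypotheses that must be checked are compactness of the input set and continuity of the operator. For the input, $\Lambda$ is compact in $C^0([0,1])$ by construction, hence $\Lambda^2$ is compact in $C^0([0,1];\mathbb{R}^2)$ with the product (maximum) norm, and $\Omega_u=(0,1)$ is a bounded interval, trivially Lipschitz. For continuity, I would write $\mathcal{N}=(\mathcal{M},\mathcal{K}_1)$ and use that a map into a finite product of metric spaces is continuous precisely when each coordinate is: $\mathcal{M}$ is continuous by \cite[Theorem~4]{krstic2023neural}, and $\mathcal{K}_1$ is Lipschitz continuous by the preceding lemma; the explicit occurrences of $\hat\lambda$ inside $\kappa_1,\kappa_2$ are harmless, since $\hat\lambda$ is the (restricted) argument and the pointwise algebraic operations are continuous. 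Thus $\mathcal{N}$ is a continuous operator.

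Before invoking the theorem I would reconcile the codomain with its template $C^0(\overline{\Omega_v};\mathbb{R}^{d_{v_2}})$. The components of $\mathcal{N}$ live on two different domains — the triangle $\mathcal{T}$ for $\breve k,\kappa_2,\breve k_t$ and the interval $[0,1]$ for $\kappa_1$ — and carry $C^1$/$C^2$ regularity the approximation step itself does not use (that regularity is what makes the derivative channels well defined and is consumed later, in the Lyapunov analysis). I would therefore trivially extend $\kappa_1(x,t)$ to a function on $\mathcal{T}$ that is constant in the suppressed variable, stack the four outputs into a single $\mathbb{R}^4$-valued continuous function on $\mathcal{T}$, and apply Theorem~\ref{thm:nnoUniversalApprox} with $\Omega_u=(0,1)$, $\Omega_v$ the open triangle (so $\overline{\Omega_v}=\mathcal{T}$, a bounded Lipschitz domain), $d_{v_2}=4$, $K=\Lambda^2$, and tolerance $\epsilon/4$ per channel. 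Writing the channels of the resulting nonlocal neural operator as $\hat{\mathcal{K}}(\hat\lambda)$, $2\partial_x\hat{\mathcal{K}}(\hat\lambda)(x,x,t)+\hat\lambda$, $(\partial_{xx}-\partial_{yy}-\hat\lambda)\hat{\mathcal{K}}(\hat\lambda)$, and $\hat{\mathcal{K}}_t(\hat\lambda,\hat\lambda_t)$, and using that the exact kernel satisfies \eqref{eq:nominalGainKernel1}--\eqref{eq:nominalGainKernel3} — so that $\kappa_1\equiv 0$, $\kappa_2\equiv 0$, and $2\partial_x\mathcal{K}(\hat\lambda)(x,x,t)=-\hat\lambda$ — the four per-channel bounds become exactly the four terms of the displayed inequality; adding them gives $\epsilon$.

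I expect the main obstacle to be precisely this codomain bookkeeping: making rigorous that the derivative combinations $\kappa_1,\kappa_2$ (and $\breve k_t$) may legitimately be treated as extra output channels of a $C^0$-valued continuous operator — so that the $C^0$ approximation theorem controls them — rather than obtained by differentiating a single scalar approximant, which $C^0$ approximation would not control; and, relatedly, verifying that the neural operator can be taken with enough spatial smoothness that $\partial_x,\partial_{xx},\partial_{yy}$ of its kernel channel exist and the bound holds in the stated form. This is exactly why the $C^2_{x,y}$ and $C^1_t$ existence-and-bound results of Section~\ref{sec:kernel-properties} were established, together with the Lipschitz continuity of $\mathcal{K}_1$ needed here. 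Once the codomain is set up correctly, the remaining steps — compactness of $\Lambda^2$, continuity of $\mathcal{N}$, the triangle being a Lipschitz domain, and distributing $\epsilon$ over four channels — are routine.
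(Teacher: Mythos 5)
Your proposal follows the same route the paper takes (implicitly, since it gives no written-out proof beyond the remark preceding the theorem): verify compactness of $\Lambda^2$ and continuity of $\mathcal{N}$ — via \cite[Theorem~4]{krstic2023neural} for $\mathcal{M}$ and the preceding Lipschitz lemma for $\mathcal{K}_1$ — and then invoke Theorem~\ref{thm:nnoUniversalApprox} on the stacked operator, reading the per-channel errors off as the four terms of the displayed bound. Your additional bookkeeping (extending $\kappa_1$ to $\mathcal{T}$, splitting $\epsilon$ over channels, and flagging that the derivative combinations are approximated as output channels rather than by differentiating the scalar approximant) is consistent with, and indeed more explicit than, what the paper records.
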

\section{Stability under NO Approximated Gain Kernel} \label{sec:main-result}
   To simplify notation, define the following constants for the bounds for both the approximate backstepping kernel and approximate inverse backstepping kernel as 
    \begin{eqnarray}
        \|\hat{k}\|_\infty = \breve{k} + \tilde{k} \leq \bar{\lambda}e^{2\bar{\lambda}}+\epsilon &=:& \bar{k}\,, \\ 
        \|\hat{l}\|_\infty \leq \|\hat{k}\|_\infty e^{\|\hat{k}\|_\infty} \leq \bar{k}e^{\bar{k}} &=:& \bar{l}.
    \end{eqnarray}
We now present our main result.

\begin{theorem} \label{thm:main-result}
    Let $\bar{\lambda} > 0$. For any Lipschitz $\lambda, \hat{\lambda}(\cdot, 0) \in C^1([0, 1])$ such that $\|\lambda\|_\infty, \|\hat{\lambda}(\cdot, 0)\|_\infty \leq  \bar{\lambda}$ and for all neural operator approximations $\hat{k} = \mathcal{K}(\hat{\lambda})$ with accuracy $\epsilon \in (0, \epsilon^*)$ from Theorem \ref{thm:operatorApproxOfKernel}
    where $\epsilon^*$ is the unique solution to the equation
    \begin{equation}
        \epsilon^*\left(1+(\epsilon^*+\bar{\lambda}e^{2\bar{\lambda}})e^{\epsilon^* + \bar{\lambda}e^{2\bar{\lambda}}} \right) = 1/12\,        % \frac{1}{4}-3\left(1+(\epsilon^*+\bar{\lambda}e^{2\bar{\lambda}})e^{\epsilon^* + \bar{\lambda}e^{2\bar{\lambda}}} \right)\epsilon^* = 0\,,
    \end{equation}
    (whose left side being zero at zero and monotonic implies the existence and uniqueness of $\epsilon^*$), 
    for all $\gamma \in (0, \gamma^*(\epsilon, \bar{\lambda}))$, where
%    there exists 
    %a $\gamma^*$
    \begin{equation} \label{eq:main-gammastar}
        \gamma^*(\epsilon, \bar{\lambda}) = \frac{\frac{1}{4}-3(1+\bar{l})\epsilon}{(1+\bar{l})^2(1+\bar{k})^2} > 0 \,,
    \end{equation}
    and for all initial estimates $\hat{\lambda}(\cdot, 0), \hat{\lambda}_t(\cdot, 0) \in \Lambda$ and all initial condition $u_0 \in H^2(0, 1)$ compatible with boundary conditions, the classical solution, for which $\hat{\lambda}_t(\cdot, t)$ remains in $\Lambda$ and $\hat{k}(\cdot, t)$ remains  differentiable, of the closed-loop system $(u, \hat{\lambda}, \mathcal{\hat{K}}(\hat{\lambda}))$ consisting of the plant \eqref{eq:parabolicMain1}, \eqref{eq:parabolicMain2}, \eqref{eq:parabolicMain3}, the update law 
    \begin{eqnarray} 
    \hat{\lambda}_t(x, t) &=& \text{\rm Proj}(\phi(x, t), \hat{\lambda}(x, t))\label{eq:mainResultUpdateLaw1}\,, \\ 
    \phi(x, t) &=& \gamma \frac{u(x, t)}{1 + \|\hat{w}\|^2} \nonumber \\ && \times \left(\hat{w}(x, t) - \int_x^1 \hat{k}(y, x, t) \hat{w}(y, t) dy\right) \label{eq:mainResultUpdateLaw2}\,, \\ 
    \hat{w}(x, t) &=& u(x, t) - \int_0^x \hat{k}(x, y, t)u(y, t) dy \,, \label{eq:mainResultUpdateLaw3}
    \end{eqnarray}
    and the controller 
    \begin{equation} \label{eq:finalController}
        U(t) = \int_0^1 \hat{k}(1, y, t)u(y, t) dy \,,
    \end{equation}
    is bounded for all $x \in [0, 1], t \in \rplus$ and 
    \begin{equation}
        \lim_{t \to \infty} \max_{x \in [0, 1]} |u(x, t)| = 0\,.
    \end{equation}
    Additionally, there exist constants $\rho, R > 0$ such that the  stability estimate 
    \begin{eqnarray}
        \Gamma(t) &\leq& R(e^{\rho \Gamma(0)} - 1)\,,  \label{eq:Gamma-stability-estimate} \\ 
        \Gamma(t) &:=& \int_0^1 \left[ u^2(x, t) + \left(\lambda(x) - \hat{\lambda}(x, t)\right)^2\right] dx\,, \label{eq:Gamma-def}
    \end{eqnarray}
    holds for all $t \geq 0$. 
\end{theorem}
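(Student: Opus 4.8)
The plan is the standard Lyapunov route for adaptive backstepping, adapted to the extra neural defect: convert the plant into a perturbed heat equation via the online transformation $\hat{w}$, design a logarithmic Lyapunov functional matched to the normalized update law so that the parameter-estimation perturbation is cancelled exactly, and then absorb the two remaining perturbations — the neural-error defect and the kernel-transient term $\hat{k}_t$ — using the smallness of $\epsilon$ and $\gamma$ encoded in $\epsilon^*$ and $\gamma^*(\epsilon,\bar\lambda)$. I would first derive the target system for $\hat{w}$ in \eqref{eq:mainResultUpdateLaw3}: differentiating in $t$, substituting \eqref{eq:parabolicMain1}, integrating by parts twice in $y$, and using $\hat{k}(x,0,t)=0$ together with $u(0,t)=0$ and the control law \eqref{eq:finalController} (which forces $\hat{w}(1,t)=0$) gives
\[
\hat{w}_t=\hat{w}_{xx}+\tilde\lambda(x)u(x)-\int_0^x\hat{k}(x,y,t)\,\tilde\lambda(y)u(y)\,dy+\delta_{\epsilon}(x,t)-\int_0^x\hat{k}_t(x,y,t)u(y,t)\,dy,
\]
with $\hat{w}(0,t)=\hat{w}(1,t)=0$, where $\tilde\lambda=\lambda-\hat\lambda$ and $\delta_{\epsilon}$ collects the residuals of the kernel conditions \eqref{eq:nominalGainKernel1}, \eqref{eq:nominalGainKernel2}, \eqref{eq:nominalGainKernel3} evaluated at $\hat{k}$ — precisely the quantities estimated in Theorem \ref{thm:operatorApproxOfKernel}, so each piece of $\delta_{\epsilon}$ is $O(\epsilon)$ uniformly. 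The inverse transform \eqref{eq:inverse-transform} then gives $u=\hat{w}+\int_0^x\hat{l}\,\hat{w}$ and $\|u(\cdot,t)\|\le(1+\bar l)\|\hat{w}(\cdot,t)\|$, closing the loop in $(\hat{w},\tilde\lambda)$.

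Next I would take $V(t)=\tfrac12\log\!\big(1+\|\hat{w}(\cdot,t)\|^2\big)+\tfrac1{2\gamma}\|\tilde\lambda(\cdot,t)\|^2$ and compute $\dot V$. The diffusion term contributes $-\tfrac{1}{1+\|\hat{w}\|^2}\|\hat{w}_x\|^2$, and by Poincaré's inequality (both ends clamped) $\|\hat{w}_x\|^2$ dominates $\|\hat{w}\|^2$. Integrating the parameter perturbation $\tilde\lambda u-\int_0^x\hat{k}\tilde\lambda u$ against $\tfrac{1}{1+\|\hat{w}\|^2}\hat{w}$ and applying Fubini produces exactly $\tfrac{1}{1+\|\hat{w}\|^2}\int_0^1\tilde\lambda(x)u(x)\big(\hat{w}(x)-\int_x^1\hat{k}(y,x,t)\hat{w}(y)\,dy\big)dx=\tfrac1\gamma\int_0^1\tilde\lambda\,\phi\,dx$, whereas $\tfrac{d}{dt}\tfrac1{2\gamma}\|\tilde\lambda\|^2=-\tfrac1\gamma\int_0^1\tilde\lambda\,\mathrm{Proj}(\phi,\hat\lambda)\,dx\le-\tfrac1\gamma\int_0^1\tilde\lambda\,\phi\,dx$, the inequality being the standard projection estimate (valid since $\tilde\lambda\hat\lambda\le0$ on $\{|\hat\lambda|=\bar\lambda\}$). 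Hence these two contributions cancel up to a nonpositive remainder — which is the entire point of the $\hat{k}$-weighted form of $\phi$ in \eqref{eq:mainResultUpdateLaw2}.

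What remains is $\dot V\le\tfrac{1}{1+\|\hat{w}\|^2}\big(-\|\hat{w}_x\|^2+\langle\hat{w},\delta_{\epsilon}\rangle-\langle\hat{w},\int_0^x\hat{k}_t u\rangle\big)$. For the neural defect, Cauchy–Schwarz, $\|u\|\le(1+\bar l)\|\hat{w}\|$, and the $\epsilon$-bounds of Theorem \ref{thm:operatorApproxOfKernel} give $\langle\hat{w},\delta_{\epsilon}\rangle\le 3(1+\bar l)\epsilon\,\|\hat{w}\|^2$, the factor $3$ arising from splitting $\delta_{\epsilon}$ into its boundary-, interior-, and $\hat{k}_t$-accuracy pieces. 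For the kernel transient, Lemma \ref{lem:ktbound} and the $\hat{k}_t$-accuracy in Theorem \ref{thm:operatorApproxOfKernel} give $\|\hat{k}_t(\cdot,t)\|\le M\|\hat\lambda_t(\cdot,t)\|+O(\epsilon)$, while $\|\hat\lambda_t(\cdot,t)\|\le\|\phi(\cdot,t)\|$ and the normalization by $1+\|\hat{w}\|^2$ renders $\|\phi\|$ of order $\gamma(1+\bar l)(1+\bar k)$ uniformly in $t$ (using the established boundedness of $\|\hat{w}\|,\|u\|$); collecting powers, $\langle\hat{w},\int_0^x\hat{k}_t u\rangle\le(1+\bar l)^2(1+\bar k)^2\gamma\,\|\hat{w}\|^2$ up to constants. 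After Poincaré and the Young splittings absorbing the projection remainder, one arrives at $\dot V\le\tfrac{\|\hat{w}\|^2}{1+\|\hat{w}\|^2}\big(-\tfrac14+3(1+\bar l)\epsilon+(1+\bar l)^2(1+\bar k)^2\gamma\big)$; since the defining equation for $\epsilon^*$ simplifies to $(1+\bar l)\epsilon^*=\tfrac1{12}$ (using $1+(\epsilon^*+\bar\lambda e^{2\bar\lambda})e^{\epsilon^*+\bar\lambda e^{2\bar\lambda}}=1+\bar l$), the choices $\epsilon<\epsilon^*$ and $\gamma<\gamma^*(\epsilon,\bar\lambda)$ make the coefficient negative and $\dot V\le-c\,\tfrac{\|\hat{w}(\cdot,t)\|^2}{1+\|\hat{w}(\cdot,t)\|^2}\le0$ for some $c>0$. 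Boundedness of $V$ then yields boundedness of $\|\hat{w}\|$, $\|\tilde\lambda\|$, hence $\|u\|\le(1+\bar l)\|\hat{w}\|$; integrating gives $\|\hat{w}\|^2/(1+\|\hat{w}\|^2)\in L^1(\rplus)$, hence $\|\hat{w}\|\in L^2(\rplus)$, and $\hat\lambda_t\in L^2(\rplus)$. A Barbalat/uniform-continuity argument for $t\mapsto\|\hat{w}(\cdot,t)\|^2$ — using the target system together with an independent $H^1$/$H^2$ energy estimate bounding $\|\hat{w}_x\|,\|\hat{w}_{xx}\|$ via \eqref{eq:gainKernelBnd} and the kernel's $C^2$ regularity — gives $\|\hat{w}(\cdot,t)\|\to0$, hence $\|u(\cdot,t)\|\to0$, and Agmon's inequality $\|u\|_\infty^2\le 2\|u\|\,\|u_x\|$ with $\|u_x\|$ bounded upgrades this to $\max_x|u(x,t)|\to0$. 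Finally \eqref{eq:Gamma-stability-estimate} follows because $\Gamma(t)$ in \eqref{eq:Gamma-def} is equivalent (through \eqref{eq:backsteppingTransform}, \eqref{eq:inverse-transform}, and the kernel bounds) to $\|\hat{w}\|^2+\|\tilde\lambda\|^2$, which $V$ controls via $\log(1+\|\hat{w}\|^2)$ and $\tfrac1{2\gamma}\|\tilde\lambda\|^2$, so $V(t)\le V(0)$ exponentiates to the stated form $\Gamma(t)\le R(e^{\rho\Gamma(0)}-1)$; throughout one invokes the theorem's standing hypothesis that $\hat\lambda_t(\cdot,t)\in\Lambda$ and $\hat{k}(\cdot,t)$ stays differentiable.

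\textbf{Main obstacle.} I expect the crux to be the self-referential, $\gamma$-coupled estimate of the kernel-transient term $-\int_0^x\hat{k}_t u$: its size is governed by $\|\hat\lambda_t\|$, which is itself the update-law output and hence proportional to $\gamma$ and to the state, so one must show its contribution to $\dot V$ is genuinely higher order (quadratic in the state after the $1+\|\hat{w}\|^2$ normalization and linear in $\gamma$) and therefore absorbable — all while keeping the bookkeeping tight enough that the parameter-error cancellation survives intact and the thresholds collapse to exactly $\epsilon^*$ and $\gamma^*(\epsilon,\bar\lambda)$. The $2D$ nature of the kernel compounds this: it produces the $\int_x^1\hat{k}(y,x,t)\hat{w}(y)\,dy$ term in $\phi$ through the Fubini swap, it forces the Wendroff (2D Gronwall) estimate behind Lemma \ref{lem:ktbound} rather than a scalar one, and it makes the $H^1$/$H^2$ estimate supporting Agmon's inequality substantially heavier than in the hyperbolic counterpart of \cite{lamarque2024adaptive}.
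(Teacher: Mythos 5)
Your proposal takes essentially the same route as the paper's proof: the same $\hat w$-target system containing the neural-approximation defects and the $\tilde\lambda$, $\hat k_t$ perturbations, the same normalized Lyapunov function $V=\tfrac12\ln(1+\|\hat w\|^2)+\tfrac1{2\gamma}\|\tilde\lambda\|^2$ with the projection/Fubini cancellation of the parameter-error term, the same bounds on the $\epsilon$-terms and on $\hat k_t$ via Lemma \ref{lem:ktbound} and the normalized update law followed by Poincar\'e, and the same Barbalat/Agmon and exponential-of-$V$ steps for regulation and the $\Gamma$ estimate. Your identification that the defining equation for $\epsilon^*$ is exactly $3(1+\bar l)\epsilon^*=1/4$, guaranteeing $\gamma^*(\epsilon,\bar\lambda)>0$, matches the paper's construction, so the proposal is correct and essentially identical in approach.
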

\begin{proof}
    Consider the approximate backstepping transform
    \begin{eqnarray} \label{Eq:backsteppingkhat}
        \hat{w}(x, t) &=& u(x, t) - \int_0^x \hat{k}(x, y, t) u(y, t) dy\,,  \\ 
        u(x, t) &=& \hat{w}(x, t) + \int_0^x \hat{l}(x, y, t) \hat{w}(y, t)dy \,,\label{eq:backsteppinglhat}
    \end{eqnarray}
    where $\hat{k} = \mathcal{K}(\hat{\lambda})$ and $\hat{l}$ is the corresponding inverse backstepping transformation satisfying, 
    \begin{eqnarray}
        \hat{l}(x, y, t) = \hat{k}(x, y, t) + \int_{y}^x \hat{k}(x, \xi, t)\hat{l}(\xi, y, t) d\xi \,. 
    \end{eqnarray}
    Then, following \ref{appendix:target-system}, the target system becomes
    \begin{eqnarray} \label{eq:target}
        \hat{w}_t &=& \hat{w}_{xx} + \delta_{k0}(x, t) u(x, t) - \int_0^x \delta_{k1}(x, y, t)u(y, t) dy \nonumber \\ && - \int_0^x \tilde{\lambda}(y, t) \hat{k}(x, y, t) u(y, t) dy \nonumber \\&& - \int_0^x \hat{k}_t(x, y, t) u(y, t)\,, \\ 
        \hat{w}(0, t) &=& 0 \,,\\ 
        \hat{w}(1, t) &=& 0\,, 
    \end{eqnarray}
    where 
    \begin{align}
        \delta_{k0}(x, t) =&\tilde{\lambda}(x, t) - 2\tilde{k}_x(x, x, t) \,,\\ 
        \delta_{k1}(x, y, t)= & \tilde{k}_{xx}(x, y, t) - \tilde{k}_{yy}(x, y, t) - \hat{\lambda}(y)\tilde{k}(x, y, t)\,.
    \end{align}
    Note, such a target system has two perturbation terms given by $\delta_{k0}$ and $\delta_{k1}$ from the neural operator approximation as in \cite{krstic2023neural} as well as two perturbations from the adaptive control scheme - namely the parameter estimation error $\tilde{\lambda}$ and the rate of the parameter estimation gain $\hat{k}_t$.  
    Now, for constant $\gamma > 0$, consider the Lyapunov function
    \begin{equation}
        V = \frac{1}{2}\ln(1+\|\hat{w}\|^2) + \frac{1}{2\gamma}\|\tilde{\lambda}\|^2\,. \label{Eq:lyapunovFunc}
    \end{equation}
    Computing the time derivative along the system trajectories, applying Leibniz rule, substituting for $\hat{w}_t$ and noting $\frac{\partial}{\partial t} \tilde{\lambda} = \hat{\lambda}_t$ yields
    \begin{eqnarray}
        \dot{V} &=& \frac{1}{1|\|\hat{w}\|^2} \int_0^1 \hat{w}(x, t) \hat{w}_t(x, t) dx  \nonumber \\ && +\frac{1}{\gamma} \int_0^1 \tilde{\lambda}(x, t) \tilde{\lambda}_t(x, t) dx \\ 
       &=& \frac{1}{1|\|\hat{w}\|^2} \bigg(\int_0^1 \hat{w}(x, t) \hat{w}_{xx} dx  \nonumber \\ &&+ \int_0^1 \hat{w}(x, t) \delta_{k0}(x, t) u(x, t) dx \nonumber \\ &&+ \int_0^1 \hat{w}(x, t) \int_0^x \delta_{k1}(x, y, t) u(y, t) dy dx \nonumber \\ && - \int_0^1 \hat{w}(x, t) \int_0^x \tilde{\lambda}(y, t) \hat{k}(x, y, t)u(y, t) dy dx \nonumber \\ && - \int_0^1 \hat{w}(x, t)\int_0^x \hat{k}_t(x, y, t) u (y, t) dy dx \bigg)  \nonumber \\ && +\frac{1}{\gamma} \int_0^1 \tilde{\lambda}(y, t) \hat{\lambda}_t(y, t) dy\,.
    \end{eqnarray}
    Noting that $\hat{w}(1, t) = \hat{w}(0, t) = 0$ and applying integration by parts to the first term along with substituting the update law \eqref{eq:mainResultUpdateLaw1}, \eqref{eq:mainResultUpdateLaw2} computed with $\hat{k}$ in for $\hat{\lambda}_t$ yields
    \begin{eqnarray}
        \dot{V} &=& -\frac{1}{1 + \|\hat{w}\|^2} (I_1(x, t) + I_2(x, t) + I_3(x, t) \nonumber \\ &&  + I_4(x, t)) \,,\\ 
        I_1(x, t) &=& \int_0^1 \hat{w}_x^2(x, t) dx \,,\\ 
        I_2(x, t) &=& 2 \int_0^1 \hat{w}(x, t) \tilde{k}_x(x, x, t)u(x, t) dx \,,\\ 
        I_3(x, t) &=& \int_0^1 \hat{w}(x, t) \int_0^x \delta_{k1}(x, y, t) u(y, t) dy dx\,, \\ 
        I_4(x, t) &=& \int_0^1 \hat{w}(x, t) \int_0^x \hat{k}_t(x, y, t)u(y, t) dy dx \,. 
    \end{eqnarray}

    We now bound each term in terms of $\|\hat{w}_x\|^2$ to estimate a bound on $\dot{V}$. The $I_1$ term is obvious. 

    By expanding $u(x, t)$, using \eqref{eq:backsteppinglhat}, and substituting $\tilde{k}_x(x,x,t) < \epsilon/2$ from \ref{thm:operatorApproxOfKernel} yields
    \begin{eqnarray}
        |I_2(x, t)| &\leq& \epsilon \int_0^1 \hat{w}(x, t) \bigg(\hat{w}(x, t)\nonumber \\ && + \int_0^x \hat{l}(x, y, t) \hat{w}(y, t) dy \bigg)dx \\ 
        &\leq& \epsilon (1+\bar{l})\|\hat{w}\|^2\,,
    \end{eqnarray}
    and employing Poincar\'{e}'s inequality yields
    \begin{equation}
        |I_2(x, t)| \leq 4 \epsilon (1+\bar{l})\|\hat{w}_x\|^2\,.
    \end{equation}
    The same exact procedure is used to bound $I_3$ yielding
    \begin{equation}
        |I_3(x, t)| \leq 4 \epsilon (1+\bar{l}) \|\hat{w}_x\|^2\,.
    \end{equation}
    Lastly, to bound $I_4$, first note that the update law satisfies
    \begin{equation}
        \|\hat{\lambda}_t\| \leq \gamma (1+\bar{l})(1+\bar{k})\,,
    \end{equation}
    and thus applying the bound \eqref{eq:kt-bound-lem}, \eqref{eq:def-M} given in Lemma \ref{lem:ktbound} yields
    \begin{equation} \label{eq:kernelHat-Bound}
        \|\hat{k}_t\| \leq \epsilon + \gamma(1+\bar{l})(1+\bar{k})^2e^{2\bar{\lambda}}\,.
    \end{equation}
    Then, expanding $I_4$ using \eqref{eq:backsteppinglhat} and applying \eqref{eq:kernelHat-Bound} yields
    \begin{equation}
        |I_4(x, t)| \leq 4 (\epsilon(1+\bar{l}) + \gamma(1+\bar{l})^2(1+\bar{k})^2 e^{2\bar{\lambda}}) \|\hat{w}_x\|^2\,.
    \end{equation}
    Combining the bounds on $I_2, I_3, I_4 $ yields the following estimate
    \begin{equation}
        \dot{V} \leq -\frac{4}{1+\|\hat{w}\|^2}\left[\frac{1}{4}-\left(1+\bar{l}\right)\left(3\epsilon - \gamma(1+\bar{l})(1+\bar{k})^2  e^{2\bar{\lambda}}\right)\right] \|\hat{w}_x\|^2\,,
    \end{equation}
    yielding
    \begin{eqnarray}
        \dot{V} &\leq& - \frac{4(1-\gamma/\gamma^*)}{(1+\|\hat{w}\|^2)} \int_0^1 \hat{w}_x^2(x) dx\,, \\ 
        \gamma^* &=& \frac{\frac{1}{4}-3(1+\bar{l})\epsilon}{(1+\bar{l})^2(1+\bar{k})^2}\,.
    \end{eqnarray}
    Thus, $\|w\|$ is bounded and $\|w_x\|^2$ is integrable when $\gamma < \gamma^*$. Similarly, we can bound the time derivative of $\|w_x\|^2$ by the following sequence. First, compute the derivative, then apply integration by parts and substitution of the target system. Then, substitute the bounds from Theorem \ref{thm:operatorApproxOfKernel} along with the inverse transform \eqref{eq:backsteppinglhat} and again perform integration of parts yielding
    % USE AlignNAT HERE 
    \begin{alignat}{3}
        \frac{1}{2} \frac{\partial}{\partial t} \|\hat{w}_x\|^2 &\leq&& \int_0^1 \hat{w}_x \hat{w}_{xt} dx \label{eq:99} \\ 
        &\leq&& \hat{w}_x(1, t) \hat{w}_t(1, t) - \hat{w}_x(0, t) \hat{w}_t(0, t) \nonumber \\ && -&\int_0^1 \hat{w}_{xx}(x, t) \hat{w}_t(x, t) dx \\ 
        &\leq & -&\int_0^1 \hat{w}_{xx}^2(x, t) \nonumber \\ && +& 2\int_0^1 \hat{w}_{xx}(x, t) \hat{k}_x(x,x, t) u(x, t) dx \nonumber \\ && +& \int_0^1 \hat{w}_{xx}(x, t) \int_0^x \delta_{k1}(x, y, t) u(y, t) dy dx \nonumber \\ &&+ &\int_0^1 \hat{w}_{xx}(x, t) \int_0^x \hat{k}_t(x, y, t)u(y, t)dy dx \\ 
        &\leq&& (-1 + 3\epsilon(1+\bar{l}) + \gamma(1+\bar{l})^2(1+\bar{k})^2 e^{2\bar{\lambda}}) \|\hat{w}_x\|^2  \label{eq:whatx-dt-bound} \,.
    \end{alignat}
    Now, since $\|\hat{w}_x\|^2$ is integrable for all time when $\gamma < \gamma^*$, integrating \eqref{eq:whatx-dt-bound} yields that $\|\hat{w}\|$ is bounded. Therefore, by Agmon's inequality, $w(x, t)$ is uniformly bounded for all $t \geq 0$. Furthermore, to show that $u(x, t) \to 0$, performing a similar procedure as in \eqref{eq:99}-\eqref{eq:whatx-dt-bound} yields
    \begin{eqnarray}
        \left|\frac{1}{2} \frac{d}{dt} \|\hat{w}\|^2 \right| &\leq& \|\hat{w}_x\|^2 + 2\bar{\lambda}(1+\bar{l})(1-\bar{k})\|\hat{w}\|^2 \nonumber \\ && + 4(\gamma/\gamma^*) \|\hat{w}_x\|^2\,.
    \end{eqnarray}
    where now the right hand side of this inequality is bounded. Thus, using Barbalat's lemma, we get $\|w\| \to 0$ as $t \to \infty$. Further, using Agmon's inequality we have $w(x, t) \to 0$ for all $x \in [0, 1]$ as $t \to \infty$. Lastly, the result holds for the $u$ system via the boundedness of $\hat{l}$ by $\bar{l}$ and the inverse backstepping transform \eqref{eq:inverse-transform}. Thus, the first part of Theorem \ref{thm:main-result} is complete. 

    For the stability estimate, note that the Lyapunov function \eqref{Eq:lyapunovFunc} satisfies
    \begin{eqnarray}
        \|w(t)\|^2 &\leq& (e^{2V(t)}-1)\,, \label{eq:what-sq-bnd} \\ 
        \|\tilde{\lambda}(t)\|^2 &\leq& 2  \gamma V(t) \leq \gamma(e^{2V(t)} - 1) \,, \label{eq:ltilde-sq-bnd}
    \end{eqnarray}
    for all $t\geq 0$. Further, note that the backstepping transformations with $\hat{k}$ and $\hat{l}$ in \eqref{Eq:backsteppingkhat}, \eqref{eq:backsteppinglhat}, we obtain the following inequalities
    \begin{eqnarray}
        \|u(t)\|^2 &\leq& (1+\bar{l}) \|w(t)\|^2\,, \label{eq:bcks-inequality1} \\ 
        \frac{1}{2} \ln(1+\|w(t)\|^2) &\leq& \frac{1}{2} \|w(t)\|^2  \nonumber \\ &\leq& \frac{1}{2} (1+\bar{k})^2 \|u(t)\|^2\,, \label{eq:bcks-inequality2}
    \end{eqnarray}
    Combining \eqref{eq:what-sq-bnd}, \eqref{eq:ltilde-sq-bnd}, \eqref{eq:bcks-inequality1} leads to 
    \begin{eqnarray}
        \Gamma(t) \leq \max(\gamma, (1+\bar{l})^2) \times (e^{2V(t)} - 1)\,.
    \end{eqnarray}
    Using the inequality \eqref{eq:bcks-inequality2} in combination with \eqref{eq:Gamma-def} and \eqref{Eq:lyapunovFunc} yields
    \begin{eqnarray}
        2V(t) \leq \max\left(\frac{1}{\gamma}, (1+\bar{k})^2 \right) \times \Gamma(t) \,,
    \end{eqnarray}
    resulting in the final stability estimate
    \begin{eqnarray}
        \Gamma(t) &\leq& R\left(e^{\rho \Gamma(0)} - 1\right)\,, \\ 
        R&:=& \max \left(\gamma, (1+\bar{l})^2\right)\,, \\ 
        \rho&:=& \max\left(\frac{1}{\gamma}, (1+\bar{k})^2\right)\,,
    \end{eqnarray}
    for all $t\geq 0$.
\end{proof}

We briefly remind the reader that the assumptions that $\hat{\lambda}, \hat{\lambda}_t \in \Lambda$ (i.e. $\hat{\lambda}, \hat{\lambda}_t$ are Lipschitz and bounded at every $t \geq 0$) are strong and cannot be verified apriori (also seen in \cite{lamarque2024adaptive}), but the assumption that $\hat{k}_t$ is differentiable can be satisfied by choosing a differentiable architecture of the DeepONet, namely via using Sigmoidal activation functions. See \cite{cybenko} for the formal definition of Sigmoidal functions and we note $\sigma(x) = \frac{1}{1+e^{-x}}$ is the most common example. The goal of this paper is not to study the differentiability of the neural operator and as such we do not explore this problem, but leave it for future work. Furthermore, we mention that one can employ the passive identifier approach as in \cite{lamarque2024adaptive} to alleviate the assumption of $\hat{k}_t$. 

Lastly, we briefly mention that the effect of $\epsilon$ on $\gamma$ in \eqref{eq:main-gammastar} is as one would expect: as the neural approximation error is increased, one must choose a smaller update gain since the error due to the neural operator-approximated kernel forces the update law to be more conservative in its rate of update.     

\section{Experimental Simulations} \label{sec:simulations}
\textcolor{blue}{The code for the experiments with exact parameter values is packaged into a Jupyter-Notebook available at \url{https://github.com/lukebhan/NeuralOperatorParabolicAdaptiveControl}.}

For experimental simulation, we consider the plant \eqref{eq:parabolicMain1}, \eqref{eq:parabolicMain2}, \eqref{eq:parabolicMain3} which commonly represents a chemical diffusion process where $u$ is the concentration of a molecule and the reaction kinematics are assumed to be linear with a spatially varying proportional gain \cite{https://doi.org/10.1002/anie.200905513}. In this work, we choose the spatially varying coefficient from a class of functions - namely the Chebyshev polynomials $(\lambda(x) = 25\cos(\gamma(\arccos{x}))+ 25)$ for three reasons. First, the polynomials are bounded and equicontinuous, thus aligning with the compactness requirement of Theorem \ref{thm:operatorApproxOfKernel}. Furthermore, the family of polynomials are challenging functions that yield highly unstable plants and exhibit high parameter sensitivity as small changes in $\gamma$ can drastically change the oscillation of the polynomials (for example, see \cite[Figure 1]{bhan_neural_2023}). Therefore, the neural operator must learn the mapping for vastly different shaped functions including unseen $\lambda$ with shapes differing from the functions used in training.
Lastly, the choice of a sufficiently complex family closely resembles a real-world scenario where one most likely will not know the family of $\lambda$ expected, but may try to use some set of basis functions that can accurately approximate a wide class of potential $\lambda$'s. A reasonable choice for such a set is the Chebyshev polynomials as they are well known for approximating bounded polynomials and thus are one such family to train a NO when $\lambda$ is unknown. 
\newpage

\begin{figure}[H]
    \centering
    \includegraphics{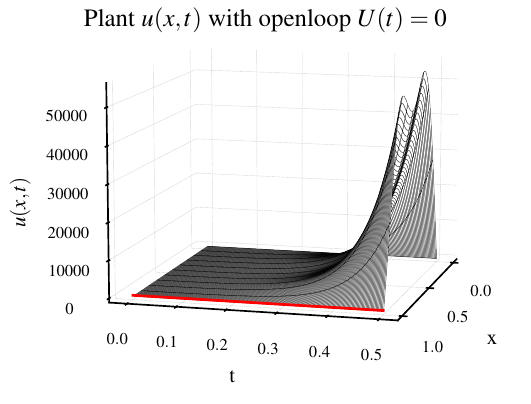}
    \caption{Simulation of the plant \eqref{eq:parabolicMain1}, \eqref{eq:parabolicMain2}, \eqref{eq:parabolicMain3} with openloop controller $U(t) = 0$. We note that the plant is openloop unstable. }
    \label{fig:openloop}
\end{figure}
\begin{figure}[H]
    \centering
    \includegraphics{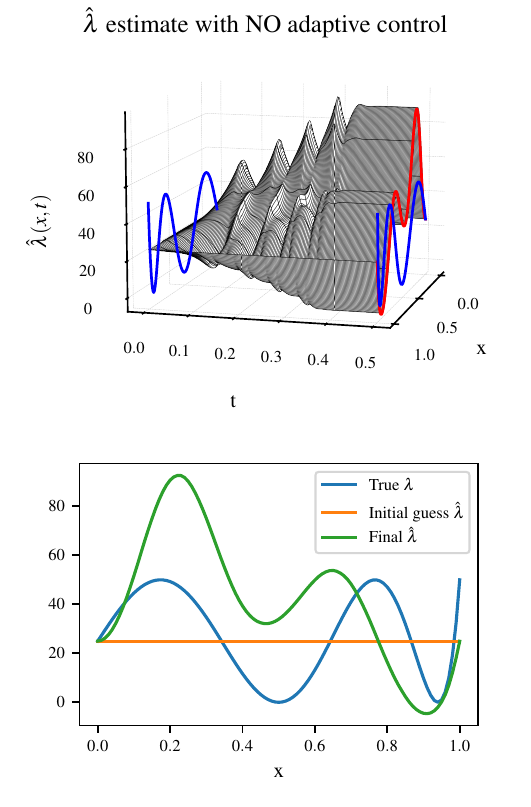}
    \caption{The $\hat{\lambda}$ estimates according to the feedback loop in Figure \ref{fig:plant} (top). \color{black} The blue lines indicate the true $\lambda$ and red line the final $\hat{\lambda}$ estimate.\color{black} The bottom figure shows the final estimates, initial guess, and true $\lambda$ for the simulation. }
    \label{fig:estimate}
\end{figure}

\begin{figure}[H]
    \centering
    \includegraphics{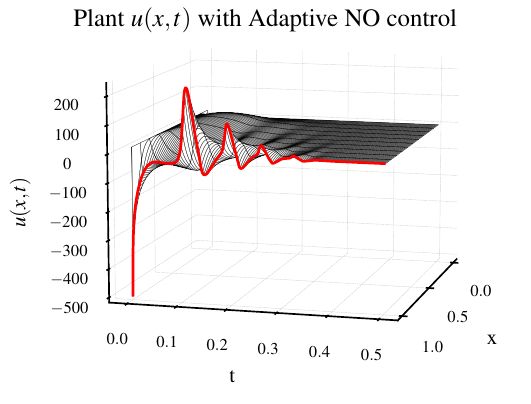}
    \caption{Simulation of the plant \eqref{eq:parabolicMain1}, \eqref{eq:parabolicMain2}, \eqref{eq:parabolicMain3} with the update law \eqref{eq:mainResultUpdateLaw1}, \eqref{eq:mainResultUpdateLaw2},  \eqref{eq:mainResultUpdateLaw3}, and the controller \eqref{eq:finalController} where $\hat{k}$ is calculated using a neural operator.}
    \label{fig:plant}
\end{figure}

\begin{figure}[H]
    \centering
    \includegraphics{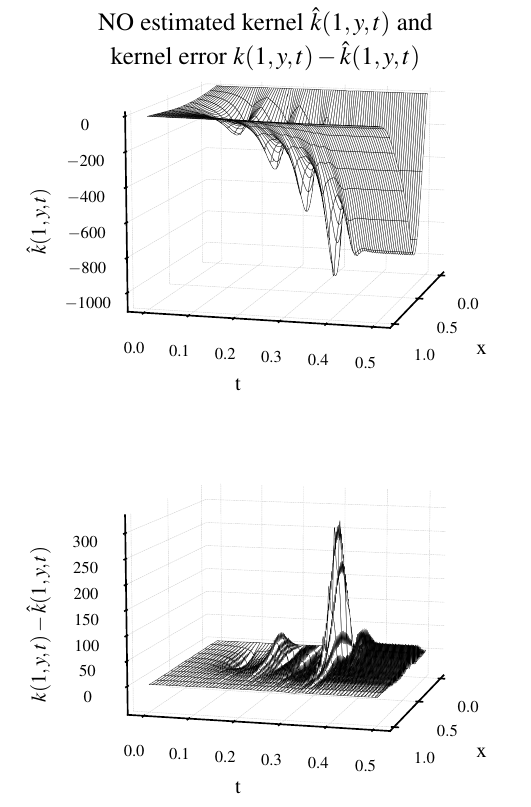}
    \caption{Neural operator approximated kernel $\hat{k}(1, y, t)$ and kernel error corresponding to the plant in Figure \ref{fig:plant}.}
    \label{fig:kernel-error}
\end{figure}

To train a neural operator approximation of the kernels for an unknown set of $\hat{\lambda}$, one must construct a dataset of $\hat{\lambda}$ likely to be seen during implementation. As such, to build our dataset, we choose $10$ $\lambda$ values randomly sampled with \\
$\gamma \sim \text{Uniform}(8.5, 9.5)$ and simulate the trajectories with a finite difference solver (from \cite{pmlr-v242-bhan24a}) for the kernel PDE. We then sample the estimated $\hat{\lambda}$ and $\breve{k}$ at $500$ time-steps for each trajectory consisting of a dataset with $5000$ $(\hat{\lambda}, \breve{k})$ pairs for training and testing the NO. This entire process takes over $1$ hour to simulate trajectories (Nvidia RTX 3090Ti) for only $10$ plants and thus further motivates the use of Neural Operator approximations in the gain kernel. For the NO architecture, we use a DeepONet \cite{lu2021deepxde} where the branch net is a convolutional neural network \cite{726791} and the trunk net is a feed-forward neural-network. 
For the analytical kernel PDE simulation, we use the numerical scheme in the Appendix of \cite{krstic2023neural} and use the benchmark implementation in \cite{pmlr-v242-bhan24a} for the plant PDE.

We present simulations in Figures \ref{fig:openloop}, \ref{fig:kernel-error}, \ref{fig:plant}, \ref{fig:estimate} for a single test case with $\gamma=9$ - a value unseen during training. Th NO approximated kernels are stabilizing, and in practice we found the average (over both space and time) relative error is less than $3\%$ between the operator approximated kernel and the finite difference kernel despite the spikes in error as shown in Figure \ref{fig:kernel-error}. Furthermore, with a spatial step size $dx=0.01$ and a temporal step size of $dt=10^{-5}$ (the largest without numerical instability for the first-order finite difference scheme for the plant), the trajectory simulation (to $T=1$s) took $297$s for the finite difference kernel and $73$s for the NO approximated kernel ($4\times$ speedup). 
This includes the simulation of the PDE plant as well as the calculation of both the update law for the estimator and the controller. For computation of just the gain kernel, we provide calculation times for various spatial step sizes in Table \ref{tab:speedups}. 

% Please add the following required packages to your document preamble:
% \usepackage{graphicx}
\begin{table}[ht]
\centering
\resizebox{0.49\textwidth}{!}{%
\begin{tabular}{|l|l|l|l|}
\hline
\begin{tabular}[c]{@{}l@{}}Spatial step\\ size (dx)\end{tabular} & \begin{tabular}[c]{@{}l@{}}Finite-difference\\ kernel \\ calculation time (ms)\end{tabular} & \begin{tabular}[c]{@{}l@{}}NO \\ kernel \\ calculation time (ms)\end{tabular} & Speedup \\ \hline
0.05                                                             & 0.38                                                                                        & 0.42                                                                          & 0.9$\times$   \\ \hline
0.01                                                             & 0.95                                                                                        & 0.68                                                                          & 13.9$\times$   \\ \hline
0.005                                                            & 39.88                                                                                       & 0.86                                                                          & 45.9$\times$   \\ \hline
\end{tabular}%
}
\caption{Calculation times for the finite difference scheme and neural operator averaged over 100 kernel calculations at various spatial resolutions. }
\label{tab:speedups}
\end{table}

\section{Conclusion}
In this paper, we extend the work of adaptive PDE control with neural operator approximated gains from hyperbolic PDEs~\cite{lamarque2024adaptive} to the more challenging parabolic PDE case. In doing so, we present new Lyapunov results for a more challenging target system \eqref{eq:target} that retains the NO perturbation terms found in \cite{krstic2023neural} coupled with the challenging perturbations from the adaptive scheme in \cite{smyshlyaevBook}. Furthermore, we showcase the power of the neural operator approximation in a scenario where the kernel must be resolved at every timestep achieving speedups of $45\times$ compared to a standard finite-difference implementation. 
%% References with BibTeX database:

\bibliographystyle{elsarticle-num}
\bibliography{references}

\appendix
\section{Wendroff Inequality} \label{appendix:wendroff}
\begin{lemma} (Wendroff Inequality, \cite{Beckenbach_Bellman_1961}). Let $c \geq 0$ and given two non-negative multi-variable functions, $u(r, s)$, $v(r, s) \geq 0$ such that 
\begin{equation}
    u(x, y) \leq c + \int_0^y \int_0^x v(r, s) u(r, s) dr ds\,.
\end{equation}
where $x, y \in \mathbb{R}$. Then the following estimate for $u(x, y)$ holds
\begin{equation}
    |u(x, y)| \leq ce^{\int_0^y \int_0^x v(r, s) u (r, s) dr ds}\, \quad \forall (x, y) \in \mathbb{R}^2\,.
\end{equation}
\end{lemma}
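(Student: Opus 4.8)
The plan is to prove this by the classical majorant-function argument, which turns the iterated integral inequality into a differential inequality for an auxiliary function. First I would dispose of the degenerate case $c=0$ by replacing $c$ with $c+\delta$, $\delta>0$, running the argument below, and letting $\delta\downarrow 0$; so assume $c>0$, fix $x,y\ge 0$, and set
\[
  \Phi(x,y):=c+\int_0^y\!\!\int_0^x v(r,s)u(r,s)\,dr\,ds .
\]
By hypothesis $0\le u\le\Phi$, and since $vu\ge 0$ the majorant $\Phi$ is non-decreasing in each variable with $\Phi(0,y)=\Phi(x,0)=c>0$, so $\Phi\ge c$ everywhere. Differentiating under the integral sign (valid by the standing continuity of $u$ and $v$) gives $\Phi_x(x,y)=\int_0^y v(x,s)u(x,s)\,ds\ge 0$, $\Phi_y\ge 0$, and $\Phi_{xy}(x,y)=v(x,y)u(x,y)\le v(x,y)\,\Phi(x,y)$ using $u\le\Phi$.

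The key step is then to observe that, since $\Phi>0$ and $\Phi_x,\Phi_y\ge 0$,
\[
  \frac{\partial}{\partial y}\!\left(\frac{\Phi_x}{\Phi}\right)
  =\frac{\Phi_{xy}\Phi-\Phi_x\Phi_y}{\Phi^{2}}
  \;\le\;\frac{\Phi_{xy}}{\Phi}\;\le\; v(x,y).
\]
Integrating in $y$ from $0$ to $y$ and using $\Phi_x(x,0)=0$ gives $\Phi_x(x,y)/\Phi(x,y)\le\int_0^y v(x,s)\,ds$; recognising the left-hand side as $\partial_x\ln\Phi(x,y)$ and integrating in $x$ from $0$ to $x$ with $\Phi(0,y)=c$ yields $\ln\bigl(\Phi(x,y)/c\bigr)\le\int_0^y\!\int_0^x v(r,s)\,dr\,ds$, hence
\[
  u(x,y)\;\le\;\Phi(x,y)\;\le\; c\,\exp\!\left(\int_0^y\!\!\int_0^x v(r,s)\,dr\,ds\right),
\]
which is the asserted estimate (the standard Wendroff bound, with integrand $v$ in the exponent, which is exactly the form invoked in the proof of Lemma~\ref{lem:ktbound}).

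This is an essentially routine argument, so there is no genuine obstacle; the step most prone to slips — the closest thing to a hard point — is the sign bookkeeping that lets one discard the term $-\Phi_x\Phi_y$ in the chain of inequalities above, together with the handling of the degenerate case $c=0$ and the justification of differentiating $\Phi$ under the integral sign. As an alternative that avoids differentiation entirely, one can iterate the consequence $\Phi\le c+L[\Phi]$ of the hypothesis, where $L[f](x,y):=\int_0^y\!\int_0^x v(r,s) f(r,s)\,dr\,ds$, establish by induction the combinatorial bound $L^{n}[c](x,y)\le c\,\bigl(\int_0^y\!\int_0^x v(r,s)\,dr\,ds\bigr)^{n}/n!$ — the same estimate pattern already used in the proof of Lemma~\ref{lem:ktbound} — and sum the resulting exponential series; I would keep the differential-inequality argument as the primary proof and note this iterative route only as a remark.
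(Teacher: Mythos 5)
Your argument is correct, but note that the paper gives no proof of this lemma at all --- it is quoted directly from Beckenbach and Bellman \cite{Beckenbach_Bellman_1961} --- so there is no internal proof to compare against; what you have written is the classical Beckenbach--Bellman argument itself (the majorant $\Phi(x,y)=c+\int_0^y\!\int_0^x v\,u\,dr\,ds$, the inequality $\partial_y(\Phi_x/\Phi)\le \Phi_{xy}/\Phi\le v$ obtained by discarding $-\Phi_x\Phi_y\le 0$, and two successive integrations using $\Phi_x(x,0)=0$ and $\Phi(0,y)=c$), and both the $c=0$ limiting step and the iterative alternative you sketch are sound. One point you should state explicitly rather than leave as an aside: what your argument proves is the standard Wendroff bound $u(x,y)\le c\,\exp\bigl(\int_0^y\!\int_0^x v(r,s)\,dr\,ds\bigr)$, with only $v$ in the exponent, whereas the lemma as printed in the appendix has the integrand $v\,u$ inside the exponential; that is evidently a typo in the statement (with the unknown $u$ in the exponent the estimate would not be a usable a priori bound), and the form you prove is exactly the one actually invoked in the proof of Lemma~\ref{lem:ktbound}, where $c=\|\hat\lambda_t\|(1+\bar\lambda e^{2\bar\lambda})$, $v\equiv\bar\lambda$, and $\xi\le 2$, $\eta\le 1$ give the factor $e^{2\bar\lambda}$ in \eqref{eq:kt-bound}. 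The only hypothesis you use beyond the printed statement is enough regularity (continuity, or at least local integrability with $\Phi$ absolutely continuous in each variable) to differentiate $\Phi$ under the integral sign; this is harmless in the paper's setting, where the lemma is applied to continuous kernels, and your iteration-based remark removes even that requirement.
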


\section{Nonlocal Neural Operator} \label{appendix:nonlocal}
\begin{definition}(Nonlocal Neural Operator (NNO); \cite{lanthaler2023nonlocal}) Let $\Omega_u \subset \mathbb{R}^{d_{u_1}}$, $\Omega_v \subset \mathbb{R}^{d_{v_1}}$ be bounded domains and define the following function spaces consisting of continuous functions $\mathcal{U} \subset C^0(\Omega_u; \mathbb{R}^{d_{u_2}})$, $\mathcal{V} \subset C^0(\Omega_v; \mathbb{R}^{d_{v_2}})$. Then, a NNO is defined as a mapping $\hat{\mathcal{G}} : \mathcal{V} \rightarrow \mathcal{U}$ given by the composition $\hat{\mathcal{G}} = \mathcal{Q} \circ \mathcal{L}_L \circ \cdots \circ \mathcal{L}_1 \circ \mathcal{R}$ consisting of a lifting layer $\mathcal{R}$, hidden layers $\mathcal{L}_l, l=1,..., L$, and a projection layer $\mathcal{Q}$. Furthermore, for $m=0, ..., M$ modes, let there be functions $\psi_{l, m}, \phi_{l, m}: \Omega_u \rightarrow \mathbb{R}^{d_c}$ for a given a channel dimension $d_c > 0$. Then, the lifting layer $\mathcal{R}$ for a function $u \in \mathcal{U}$ is given by
\begin{equation} \label{eq:appendix-lifting-layer}
    \mathcal{R} : \mathcal{U} \rightarrow \mathcal{S}, \quad u(x) \mapsto R(u(x), x)\,, 
\end{equation}
where $\mathcal{S}(\Omega_u; \mathbb{R}^{d_c})$ is a Banach space for the hidden layers and $R: \mathbb{R}^{d_{u_2}} \times \Omega_u \rightarrow \mathbb{R}^{d_c}$ is a learnable neural network acting between finite dimensional Euclidean spaces. For $l=1, ..., L$ each hidden layer $\mathcal{L}_l$ is of the form
\begin{equation} \label{eq:appendix-hiddenLayer}
    (\mathcal{L}_l v)(x) := \sigma \bigg( W_l v(x) + b_l + \sum_{m=0}^M \langle T_{l,m} v, \psi_{l, m} \rangle_{L^2(\Omega_u; \mathbb{R}^{d_c})} \phi_{l, m}(x) \bigg)\,,
\end{equation}
where $W_l, T_{l, m} \in \mathbb{R}^{d_c \times d_c}$ and bias $b_l \in \mathbb{R}^{d_c}$ are learnable parameters, $\sigma: \mathbb{R} \rightarrow \mathbb{R}$ is a smooth, infinitely differentiable activation function that acts component wise on inputs.
 Lastly, the projection layer $\mathcal{Q}$ is defined as 
\begin{equation} \label{eq:appendix-projLayer}
    \mathcal{Q} : \mathcal{S} \rightarrow \mathcal{V}, \quad s(x) \mapsto Q(s(x), x)\,, 
\end{equation}
where $Q$ is a finite dimensional neural network from $\mathbb{R}^{d_c} \times \Omega_u \rightarrow \mathbb{R}^{d_{v_2}}$.
\end{definition}
Note that \eqref{eq:appendix-hiddenLayer} is almost a traditional feed-forward neural network except for the last term that is nonlocal as the inner product is taken over the entire $\Omega_u$ space. 
\subsection{DeepONet form of NNO}
Recall, a DeepONet is of the form
\begin{equation}
    G_\mathbb{N}(\bm{u}_m)(y) = \sum_{k=1}^p g^\mathcal{N}(\bm{u}; \varphi^{(k)})f^\mathcal{N}(y; \theta^{(k)})\,,
\end{equation}
where $\bm{u}_m = (u(x_1), u(x_2), ..., u(x_m))$ for $u \in \mathcal{U}$ and $f^\mathcal{N}$, $g^\mathcal{N}$ are neural networks denote the trunk and branch network respectively. 
To represent the DeepONet as an NNO, let $W_l, b_l = 0$ for every $l=1, ..., L$. Let the branch network $g^\mathcal{N}$ which takes in sensor inputs $u(x_1), ..., u(x_m)$ be given by the following 
\begin{equation}
    g^\mathcal{N}(u) := \sigma\left(\sum_{j=1}^m R(u(x_j), x_j) \right)\,,
\end{equation}
where $R$ is the same neural network in the lifting layer $\mathcal{R}$ in \eqref{eq:appendix-lifting-layer}. Then, let the trunk network $f^\mathcal{N}$ be given by the function $\phi_{l, m}$ in \eqref{eq:appendix-hiddenLayer} with the activation function $\sigma$ and projection layer $\mathcal{Q}$ representing the identity functions. Then, the DeepONet is universal in the setting of Theorem \ref{thm:nnoUniversalApprox}. 
\subsection{FNO form of NNO}
A natural form for the third term in \eqref{eq:appendix-hiddenLayer} is choosing the kernel based operator 
\begin{equation}
    (\mathcal{L}_lv)(x) := \sigma \left(W_lv(x) + b_l + \int_\Omega K_l(x, y) v(y) dy \right).
\end{equation}
Then, one can represent the FNO architecture as a NNO in the following form. Let $K_l(x, y) = K_l(x-y)$ and $K_l(x) = \sum_{|k|\leq k_{\text{max}}} \hat{P}_{l, k} e^{ikx}$ be a trigonometric polynomial (Fourier) approximation with $k_{\text{max}} = M$ modes and $\hat{P}_{l, k}$ is a matrix of complex, learnable parameters, i.e. $\hat{P}_{l, k} \in \mathbb{C}^{d_c \times d_c}$. Then, the FNO is universal in the setting of Theorem \ref{thm:nnoUniversalApprox}. 

\section{Derivation of Target System} \label{appendix:target-system}
We begin by differentiating the transformation
\begin{equation}
    \hat{w}(x, t) = u(x, t) - \int_0^x \hat{k}(x, y, t) u(y, t) dy\,, 
\end{equation}
where we apply integration by parts twice and cancel out the $\hat{k}(x, 0, t) = 0$ and $u(0, t) = 0$ terms yielding
\begin{align} \label{eq:target-wt}
    \hat{w}_t(x, t) =& u_{xx}(x, t) + \lambda(x)u(x, t) - \hat{k}(x, x, t)u_x(x, t) \nonumber \\  &- \int_0^x \hat{k}_{yy}(x, y, t) u(y, t) dy 
   \nonumber \\ & - \int_0^x \left( \hat{k}_t(x, y, t) + \hat{k}(x, y, t)\lambda(y) \right) u(y, t) dy  \,,
\end{align}
and similarly following the application of Leibniz rule for the spatial derivative
\begin{eqnarray} \label{eq:target-wx}
    \hat{w}_{xx}(x, t) &=& u_{xx}(x, t) - 2\hat{k}_x(x, x, t)u(x, t) - \hat{k}(x, x, t)u_x(x, t) \nonumber \\&& - \int_0^x \hat{k}_{xx}(x, y, t)u(y, t) dy \,.
\end{eqnarray}
Now, subtracting \eqref{eq:target-wx} from \eqref{eq:target-wt} and substituting for $k-\hat{k} = \tilde{k}$ and $\lambda - \hat{\lambda} = \tilde{\lambda}$ yields
\begin{eqnarray}
    \hat{w}_t(x, t) - \hat{w}_{xx}(x, t) &=& (\tilde{\lambda}(x, t)-2\tilde{k}_x(x, x, t))u(x, t) \nonumber \\ &&- \int_0^x \bigg[\bigg(\tilde{k}_{xx}(x, y, t) - \tilde{k}_{yy}(x, y, t) \nonumber \\ &&- \lambda(y)\tilde{k}(x, y, t) \bigg) u(y, t) \bigg] dy \nonumber \\ &&  - \int_0^x \tilde{\lambda}(y)k(x, y, t)u(y, t) dy \nonumber \\ && - \int_0^x \hat{k}_t(x, y, t)u(y, t) dy\,.
\end{eqnarray}
And now expanding the $k = \hat{k}+\tilde{k}$ in the second integral yields
\begin{eqnarray}
     \hat{w}_t(x, t) - \hat{w}_{xx}(x, t) &=& (\tilde{\lambda}(x, t) - 2\tilde{k}_{x}(x, x, t))u(x, t) \nonumber \\ &&- \int_0^x \hat{k}(x, y, t)\tilde{\lambda}(y, t)u(y, t) dy \nonumber \\ && 
    \nonumber -\int_0^x \bigg[ \bigg(\tilde{k}_{xx}(x, y, t) - \tilde{k}_{yy}(x, y, t) \\ && - \tilde{k}(x, y, t)\hat{\lambda}(y)\bigg)u(y, t)\bigg]dy  \nonumber \\ && - \int_0^x \hat{k}_t(x, y, t)u(y, t) dy\,.
\end{eqnarray}
Lastly, substituting $\delta_{k0}$ and $\delta_{k1}$ yields the result. 
\end{document}